\documentclass{amsproc}
\usepackage{tikz}
\newtheorem{thm}{Theorem}[section]
\newtheorem{cor}[thm]{Corollary}

\newtheorem{prop}[thm]{Proposition}
\theoremstyle{definition}

\theoremstyle{remark}

\numberwithin{equation}{section}
\setcounter{tocdepth}{2}

\begin{document}
\title{Spectral curves and discrete Painlev\'e equations}
\author{Christopher M. Ormerod}
\address{California Institute of Technology, Dept. of Mathematics, 1200 East California Blvd Pasadena, CA 91125}
\email{cormerod@caltech.edu}

\begin{abstract}
It is well known that isomonodromic deformations admit a Hamiltonian description. These Hamiltonians appear as coefficients of the characteristic equations of their Lax matrices, which define spectral curves for linear systems of differential and difference systems. The characteristic equations in the case of the associated linear problems for various discrete Painlev\'e equations is biquadratic in the Painlev\'e variables. We show that the discrete isomonodromic deformations that define the discrete Painlev\'e equations may be succinctly described in terms of the characteristic equation of their Lax matrices.  
\end{abstract}

\maketitle


\section{Introduction}

This article concerns Lax pairs for the discrete Painlev\'e equations \cite{Gramani:Isomonodromic}. These Lax pairs are pairs of differential or difference operators in two variables; a spectral variable, $x$, and an independent variable, $n$. The operators may be written in matrix form as
\begin{subequations}\label{Spec}
\begin{align}
\label{Specx}(\Delta_x- A_n(x))Y_n(x) = 0,\\
\label{Specn}(\Delta_n- R_n(x))Y_n(x) = 0,
\end{align}
\end{subequations}
where $A_n(x)$ and $R_n(x)$ are meromorphic matrices in $x$ and $\Delta_x$ is one of three cases
\begin{align}
\label{op:diff}\Delta_x = \dfrac{\mathrm{d}}{\mathrm{d}x} &: f_n(x) \mapsto \dfrac{\mathrm{d}f_n(x)}{\mathrm{d}x},\\
\label{op:h}\Delta_x = \sigma_h &: f_n(x) \mapsto f_n(x+h),\\
\label{op:q}\Delta_x = \sigma_q &: f_n(x) \mapsto f_n(qx),
\end{align} 
and $\Delta_n : f_n(x) \to f_{n+1}(x)$ \cite{AB06, Sakai:qP6, Gramani:Isomonodromic, Murata2009, Yamada:LaxqEs, Rains:Ell1}. Computing the compatibility between \eqref{Specx} and \eqref{Specn} induces a transformation of the form
\begin{align}\label{intro:isomonodromy}
A_{n}(x) \to A_{n+1}(x),
\end{align}
which we call a discrete isomonodromic deformation. 

Given a particular operator of the form \eqref{Specx} there is an algorithmic method for obtaining an operator of the form \eqref{Specn} compatible with \eqref{Specx}. When $\Delta_x$ is a differential operator these deformations are known as Schlesinger transformations \cite{Jimbo:Monodromy2, SakkaMugan:P6}. When $\Delta_x$ is a difference operator these transformations are called connection (matrix) preserving deformations \cite{Borodin:connection, Sakai:qP6, Ormerod:Lattice, Sakai:Garnier}. In fact for any given \eqref{Specx} there is a finitely generated lattice of operators of the form of \eqref{Specn}, which we call a system of discrete isomonodromic deformations \cite{Ormerod:Lattice} or Schlesinger system \cite{Sakai:Schlesinger}.

Information such as the number and multiplicity of poles of $A_n(x)$ and asymptotic behavior of the solutions of \eqref{Specx} determine which systems arises as discrete isomonodromic deformations.  We could say that this information defines the ``type'' of a linear system. For example, the associated linear problem for the sixth Painlev\'e equation is determined by four Fuchsian singularities. Once an equations type been ascertained, it is a simple matter of parameterizing \eqref{Specx} in the right way. This idea has been incredibly useful in applications such as reductions of partial differential and difference equations \cite{FlaschkaNewell, OvdKQ:reductions, ormerod2013twisted} and semiclassical orthogonal polynomials \cite{OrmerodForresterWitte, Magnus1995, studyqPV}.

In the language of sheaves a system of linear differential equations defines a connection on a vector bundle, which coincides with a matrix presentation when specialized to a trivial bundle \cite{Krichever:Irregular}. A discrete version of this framework is the $d$-connection, which was considered by Arinkin and Borodin \cite{AB06}. In this setting, the type of a system of linear equations lends itself naturally to the idea of moduli spaces of ($d$-)connections. The Painlev\'e variables parameterize these moduli spaces of ($d$-)connections on vector bundles. In fact, the minimal compactification of these moduli spaces may be identified with the rational surfaces of initial conditions for the Painlev\'e equations \cite{AB06, Sakai:rational}. Using this approach, it is possible to show that Lax pairs of a certain form exist without necessarily providing a parameterization \cite{Boalch}.

The identification of an integrable system on the cotangent bundle of the moduli space of connections is the subject of Hitchin systems \cite{Hitchin}. This paper has been inspired by of the analogies one can draw in the discrete setting, which have been called generalized Hitchin systems \cite{Rains:Hitchin}. The key observation in Hitchin's framework is that the characteristic equation that defines the spectral curve gives a set of Hamiltonians whose flows are linear on the Jacobian of the spectral curve \cite{Hitchin}. This may be extended in the non-autonomous case for Painlev\'e equations, giving a Hamiltonian formulation for isomonodromic deformations \cite{Krichever:Irregular, Okamoto:Hamiltonian}. 

If we turn our attention to \eqref{intro:isomonodromy} in the autonomous setting (where $A_{n+1}(x) = A_n(x)$) we expect that the characteristic equation gives invariants \cite{vdk:staircase}. In the case of Lax pairs for QRT mappings the invariants that appear in the characteristic equation are  biquadratics \cite{QRT1, QRT2}. Since QRT mappings are defined by the addition law on a biquadratic \cite{Tsuda:QRT} the QRT maps are linear on the Jacobian of the spectral curve \cite{Rains:Hitchin}. A similar geometric setting for the discrete Painlev\'e equations may be posed in terms of the addition law on a moving biquadratic curve \cite{Kajiwara:Elliptic}. The way in which these systems are defined and related suggests that the discrete isomonodromic deformations in the case of the QRT mappings and the discrete Painlev\'e equations admit a description of the form
\begin{subequations}\label{dHamils}
\begin{align}
\label{dHamily}\tilde{\Gamma}(\lambda,x,y_{n+1},z_n) &= \tilde{\Gamma}(\lambda,x,y_n,z_n),\\
\label{dHamilz}\tilde{\Gamma}(\lambda,x,y_{n+1},z_{n+1}) &= \tilde{\Gamma}(\lambda,x,y_{n+1},z_n),
\end{align}
\end{subequations}
where $\Gamma =\det(\lambda-A_n(x))$ is the characteristic equation for $A_n(x)$ and $(y_n,z_n)$ parameterize the biquadratic spectral curve \cite{HarnadWisse, JMMS} (the trivial solutions $y_{n+1} = y_n$ and $z_{n+1} = z_n$ are discarded). We use the notation $\tilde{\Gamma}$ to mean the characteristic equation with some intermediate parameter values that do not necessarily correspond to $A_n(x)$ or $A_{n+1}(x)$. For differential operators \eqref{dHamils} coincides with treating the Hamiltonian as a QRT-type invariant, which is considered a method for obtaining integrable discretizations of biquadratic Hamiltonian systems, such as the discrete Painlev\'e equations \cite{biquadsPainleve}. Our contribution is that the characteristic equation for difference operators is also tied to the geometry of the discrete Painlev\'e equations. 

In \S 2, we will review some of the theory regarding the Hamiltonian description of isomonodomic deformations for differential equations and the geometry of the QRT mappings and discrete Painlev\'e equations. In \S3 we consider how this applies to contiguity relations for the second and sixth Painlev\'e equations and two discrete analogues of the sixth Painlev\'e equation.

\section{Spectral curves and isomonodromic deformations}

We wish to explain why the role of the characteristic equation in the Hamiltonian description of isomonodromic deformations. To relate this to discrete isomonodromic deformations, we require the formal series solutions of \eqref{Specx} in each of the cases, which will give us a way of computing \eqref{Specn} to compare against \eqref{dHamils}. 

\subsection{Hamiltonian description of isomondromic deformations}

We start with a linear problem of the form of \eqref{Specx} with \eqref{op:diff} where $A_n(x)$ is rational. Let $A_n(x)$ have a finite collection of poles, $\{a_1,\ldots, a_N\}$ (and possibly $\{\infty\}$) where the order of the pole at $x = a_{\nu}$ is $r_\nu$ ($r_\infty$). The matrix $A_n(x)$ takes the general form
\begin{align}\label{irregularform}
A_n(x) = \sum_{\nu = 1}^N \sum_{k=0}^{r_\nu} \dfrac{ A_{\nu,k}}{(x-a_\nu)^{k+1}} - \sum_{k=1}^{r_{\infty}} A_{\infty, -k} x^{k-1}.
\end{align}
We should assume that the leading coefficients, $A_{\nu, r_{\nu}}$, are semisimple with matrices $C_\nu$ such that
\[
A_{\nu, r_{\nu}} = C_{\nu} T_{\nu} C_{\nu}^{-1},
\]
where $T_{\nu} = \mathrm{diag}(t_{\nu,1},\ldots t_{\nu,m})$. We may normalize the system so that $C_\infty = I$. We also require the technical conditions (see \cite{Jimbo:Monodromy1}) that 
\begin{align*}
t_{\nu,i} \neq t_{\nu,j} \hspace{.3cm} \textrm{if} \hspace{.3cm} r_{\nu} \geq 1, \hspace{.3cm} i \neq j,\\
t_{\nu,i} - t_{\nu,j} \notin \mathbb{Z}  \hspace{.3cm} \textrm{if} \hspace{.3cm} r_{\nu} = 0, \hspace{.3cm} i \neq j.
\end{align*}

When prolonging a solution along a path around any collection of the poles, we obtain a relation
\begin{equation}\label{monodromy}
Y_n(\gamma(1)) = Y_n(\gamma(0))M_{[\gamma]} ,
\end{equation}
where $[\gamma]$ denotes the equivalence class of paths under homotopy and $M_{[\gamma]}$ is called a monodromy matrix \cite{Riemann}. If $X$ denotes the punctured sphere $\mathbb{P}_1 \backslash \{a_1,\ldots,a_N, \infty\}$, for any element $[\gamma] \in \pi_1(X)$ we obtain a matrix representation
\[
\Pi : \pi_1(X) \to \mathrm{GL}_m(\mathbb{C}).
\]
We may choose a set of generators of $\pi_1(X)$, denoted $[\gamma_i]$, so that the images, $\Pi([\gamma_i]) = M_i$, satisfy
\[
M_1 M_2 \ldots M_N M_{\infty} = I,
\]
which is equivalent to $[\gamma_1\ldots \gamma_N\gamma_{\infty}] = 1$. 

It will be useful to specify a formal solution, which we write as
\begin{equation}\label{FormalSolnonFuchsian}
Y_n(x) = C_\nu \hat{Y}_{n,\nu}(x) \exp \hat{T}_\nu(x),
\end{equation}
where $\hat{Y}_{n,\nu}(x)$ is just some series expansion in $(x-a_\nu)$ such that the constant term $\hat{Y}_{n,\nu}(a_\nu)$ is $I$ and $T_\nu(x)$ is an expansion of the form
\begin{equation}
\hat{T}_\nu(x) = \sum_{k=1}^{r_\nu} T_{\nu,k} \dfrac{(x-a_\nu)^{-k}}{-k} + T_{\nu,0} \log \left(x - a_\nu\right).
\end{equation}
Generally $\hat{Y}_{n,\nu}(x)$ is not necessarily convergent. Given a point, $z$, in some neighborhood of $x = a_\nu$, there is a basis of solutions that is convergent in some neighborhood of $z$. Let us denote the matrix containing the basis of meromorphic solutions by $Y_{n,\nu}^{(i)}$. The collection of points in which $Y_{n,\nu}^{(i)}$ is convergent defines a Stokes sector. This divides a neighborhood of $a_\nu$ into a collection of precisely $2r_\nu$ Stokes sectors.

Given the columns of $Y_{n,\nu}^{(i)}$ and $Y_{n,\nu}^{(i+1)}$ both constitute a basis for formal solutions to \eqref{Specx}, we may express the solution, $Y_{n,\nu}^{(i+1)}$ as a linear combination of $Y_{n,\nu}^{(i)}$, which means that there exists a relation of the form
\begin{equation}\label{Stokesmatdef}
Y_{n,\nu}^{(i+1)} = Y_{n,\nu}^{(i)} S_\nu^{(i)},
\end{equation}
where $S_\nu^{(i)}$ is a constant matrix called the Stokes matrix (here $S_\nu^{(2r_\nu)}$ relates $\mathcal{S}_\nu^{(2r_\nu)}$ to $\mathcal{S}_\nu^{(1)}$). This gives us a collection of constants that govern the asymptotic behavior of the solutions around the poles of $A_n(x)$ and the Stokes matrices for irregular singularities \cite{Jimbo:Monodromy1}. In forming the monodromy matrix, every path around $x=a_\nu$ passes through each of the Stokes sectors, collecting a contribution from each of the Stokes matrices. We may specify the monodromy matrices in terms of this data as
\begin{align}\label{irregmonodromy}
M_\nu = C_\nu \exp \left( 2 \pi i T_{\nu,0} \right) S_j^{(2r_\nu)} \ldots S_\nu^{(2)}S_\nu^{(1)} C_\nu^{-1}.
\end{align}
The Pfaffian system describing monodromy preserving deformations is specified by the following theorem.

\begin{thm}[Theorem 1 of \cite{Jimbo:Monodromy1}]\label{irregiso}
The monodromy matrices are preserved if and only if there exists a matrix of 1-forms $\Omega_n(x)$ depending rationally on $x$ and a matrix of $1$-forms $\Theta_{\nu}$ such that 
\begin{subequations}\label{isomonomdromyconds}
\begin{align}
\mathrm{d}A_n(x) &= \dfrac{\partial \Omega_n(x)}{\partial x} + \Omega_n(x)A_n(x) - A_n(x)\Omega_n(x),\\
\mathrm{d}C_{\nu} &= \Theta_\nu C_\nu,
\end{align}
\end{subequations}
where these $1$-forms, $\Omega_n(x)$ and $\Theta_\nu$,  are calculable by a rational procedure from $A(x)$ and $\mathrm{d}$ denotes exterior differentiation with respect to some deformation parameters.
\end{thm}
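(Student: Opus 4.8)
The plan is to prove both directions of the equivalence by exploiting the fact that the monodromy data decomposes into pieces attached to each singularity: the connection matrices $C_\nu$, the exponents and Stokes matrices $T_{\nu,0}, S_\nu^{(i)}$, and the formal exponential part $\hat T_\nu(x)$. The key structural input is the formal solution \eqref{FormalSolnonFuchsian} together with the expression \eqref{irregmonodromy} for the monodromy matrices in terms of this local data. I would treat the $\mathrm{d}$ as differentiation with respect to the deformation parameters (the pole locations $a_\nu$ and any continuous parameters entering $T_\nu$ that are \emph{not} varied in the isomonodromic flow, with the eigenvalue data held fixed), so that ``monodromy preserved'' means precisely $\mathrm{d}M_\nu = 0$ for every $\nu$ and every generator.

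For the forward direction I would assume the $M_\nu$ are constant and construct $\Omega_n(x)$ explicitly. First I would differentiate the defining equation \eqref{Specx}, i.e. $\partial_x Y_n = A_n Y_n$, with respect to the deformation parameter to get a system for $\mathrm{d}Y_n$. The natural candidate is to set $\Omega_n(x) = (\mathrm{d}Y_n)\,Y_n^{-1}$; the compatibility of $\partial_x$ with $\mathrm{d}$ applied to $Y_n$ then yields the first equation of \eqref{isomonomdromyconds} automatically as a zero-curvature/cross-differentiation identity. The real content is showing that this $\Omega_n(x)$ is \emph{rational} in $x$ rather than merely meromorphic with essential singularities or multivaluedness. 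Here the constancy of the monodromy is essential: because each $M_\nu$ and each Stokes matrix $S_\nu^{(i)}$ is $\mathrm{d}$-closed, the multivalued parts of $Y_n$ cancel in $(\mathrm{d}Y_n)Y_n^{-1}$, leaving a single-valued function on the punctured sphere. Matching against the local expansion \eqref{FormalSolnonFuchsian} at each $a_\nu$ shows the singularities are poles of bounded order, so by Liouville's theorem $\Omega_n(x)$ is rational. The matrices $\Theta_\nu$ arise from the leading-order behavior of $\Omega_n$ near $a_\nu$ acting on the connection matrices $C_\nu$, giving $\mathrm{d}C_\nu = \Theta_\nu C_\nu$.

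For the converse I would assume \eqref{isomonomdromyconds} holds and show the monodromy is locally constant. Given $\Omega_n(x)$ solving the first relation, I would form the overdetermined linear system $\partial_x Y_n = A_n Y_n$, $\mathrm{d}Y_n = \Omega_n Y_n$; the first relation of \eqref{isomonomdromyconds} is exactly its Frobenius integrability condition, so a simultaneous fundamental solution exists. Prolonging along a loop $\gamma_\nu$ and using that $\Omega_n(x)$ is single-valued (rational) shows that $\mathrm{d}(Y_n(\gamma(1))Y_n(\gamma(0))^{-1}) = 0$ along the flow, whence each $M_{[\gamma]}$ is independent of the deformation parameters. The second relation $\mathrm{d}C_\nu = \Theta_\nu C_\nu$ ensures the local normalizations in \eqref{irregmonodromy} deform consistently so that the Stokes matrices and formal exponents, and hence the full monodromy representation $\Pi$, are preserved.

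The main obstacle, and the step I expect to carry the technical weight, is the rationality argument in the forward direction: controlling the behavior of $\Omega_n(x) = (\mathrm{d}Y_n)Y_n^{-1}$ at the irregular singularities. Near an irregular $a_\nu$ the factor $\exp\hat T_\nu(x)$ contributes terms that are singular as $x\to a_\nu$, and one must verify that differentiating the deformation parameter does not introduce essential singularities into $\Omega_n$ — that the combination is genuinely meromorphic with a pole of order at most $r_\nu$. This is precisely where the constancy of the Stokes matrices is used, and it is the delicate local analysis underlying \cite{Jimbo:Monodromy1}; the Fuchsian case is comparatively routine, but the irregular case requires tracking the asymptotic expansion sector by sector across all $2r_\nu$ Stokes sectors.
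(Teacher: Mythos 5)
First, a framing point: the paper does not prove this statement at all. It is imported verbatim as Theorem 1 of Jimbo--Miwa--Ueno \cite{Jimbo:Monodromy1}, and the sentence immediately following it explicitly defers every detail ("We leave the reader with a reference to the work of Jimbo et al."). So the only meaningful comparison is with the classical argument in that reference, and your outline does follow its skeleton: define $\Omega_n(x)=(\mathrm{d}Y_n)Y_n^{-1}$, use constancy of the monodromy data to get single-valuedness, use the local expansions \eqref{FormalSolnonFuchsian} plus a Liouville argument to get rationality, and run Frobenius integrability of the Pfaffian system for the converse. That is the correct strategy and the same one JMU execute.

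However, there is a genuine gap in the forward direction, located exactly at the step you flag as carrying the technical weight. You conflate two different objects under the name ``connection matrices.'' The $C_\nu$ of this paper are the diagonalizers of the leading coefficients $A_{\nu,r_\nu}$; they are \emph{not} preserved quantities (the theorem itself asserts $\mathrm{d}C_\nu=\Theta_\nu C_\nu$), so they cannot serve as part of the constant monodromy data your cancellation argument relies on. What that argument actually needs is constancy of the matrices $E_\nu$ relating the globally continued fundamental solution to each local sectorial solution $Y_{n,\nu}^{(i)}$ of \eqref{Stokesmatdef}. Writing $Y_n=Y_{n,\nu}^{(i)}E_\nu$ gives
\[
\Omega_n(x) \;=\; \bigl(\mathrm{d}Y_{n,\nu}^{(i)}\bigr)\bigl(Y_{n,\nu}^{(i)}\bigr)^{-1} \;+\; Y_{n,\nu}^{(i)}\,\bigl(\mathrm{d}E_\nu\bigr)E_\nu^{-1}\,\bigl(Y_{n,\nu}^{(i)}\bigr)^{-1},
\]
and the second term is a conjugation of $(\mathrm{d}E_\nu)E_\nu^{-1}$ by a matrix asymptotic to $C_\nu\hat{Y}_{n,\nu}\exp\hat{T}_\nu$; unless $\mathrm{d}E_\nu=0$, the off-diagonal entries pick up factors $\exp\bigl((t_{\nu,i}-t_{\nu,j})(x-a_\nu)^{-r_\nu}/(-r_\nu)\bigr)$, i.e.\ essential singularities, and constancy of the $M_{[\gamma]}$ and $S_\nu^{(i)}$ alone does not rule this out. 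Relatedly, the literal hypothesis ``the monodromy matrices are preserved'' is too weak at irregular points: the representation $\Pi$ does not determine the Stokes and connection data, and JMU's Theorem 1 is really stated for the full generalized monodromy data $\{E_\nu, S_\nu^{(i)}, T_{\nu,0}\}$. Your proof silently (and correctly) upgrades to constancy of the $S_\nu^{(i)}$, but it must also include the $E_\nu$, or the rationality of $\Omega_n$ fails. The converse has the analogous, milder gap: after solving the Pfaffian system one must still show the sectorial solutions with prescribed asymptotics satisfy the same equation $\mathrm{d}Y_n=\Omega_n Y_n$, which you gesture at but do not carry out.
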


We have specified the process is rational, however, as the precise formulation is not the emphasis of this paper. We leave the reader with a reference to the work of Jimbo et al. \cite{Jimbo:Monodromy1}. A remarkable consequence of \cite{Jimbo:Monodromy1, Jimbo:Monodromy2} is the general integrability of the resulting system of partial differential equations defined by \eqref{isomonomdromyconds}.

\begin{thm}[Theorem 2 of \cite{Jimbo:Monodromy1}]
The non-linear differential equations are completely integrable in the sense of Frobenius in each of the variables
\[
\left\{\begin{array}{c} a_1, \ldots, a_n \\
t_{1,1}, \ldots, t_{1,n}\\
t_{\nu,1}, \ldots, t_{\mu,n}\\
\end{array}\right\}.
\]
\end{thm}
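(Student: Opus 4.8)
The plan is to recognize that the nonlinear differential equations in question are nothing but the coefficient form of the Pfaffian system \eqref{isomonomdromyconds}, obtained by expanding both sides in $x$ and reading off the equations satisfied by the entries of the $A_{\nu,k}$ and of the $C_\nu$. By the Frobenius theorem, complete integrability of such a system is equivalent to the closure of the associated exterior differential ideal. Since \eqref{isomonomdromyconds} already encodes the mixed $x$-versus-parameter compatibility $\partial_x(\mathrm{d}Y_n) = \mathrm{d}(\partial_x Y_n)$, it remains only to verify the two Maurer--Cartan identities
\begin{equation}
\mathrm{d}\Omega_n(x) = \Omega_n(x)\wedge\Omega_n(x), \qquad \mathrm{d}\Theta_\nu = \Theta_\nu\wedge\Theta_\nu,
\end{equation}
as identities of matrix-valued $2$-forms that are rational in $x$.

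The key step is to pass from the coefficients to the fundamental solution $Y_n(x)$ itself. The hypothesis being imposed is precisely that the monodromy data --- the connection matrices, the formal exponents $T_{\nu,0}$ appearing in $\exp(2\pi i T_{\nu,0})$, and the Stokes matrices $S_\nu^{(i)}$ of \eqref{Stokesmatdef} --- are held fixed under the deformation $\mathrm{d}$. I would then set $\Omega_n(x) := \mathrm{d}Y_n(x)\,Y_n(x)^{-1}$ and $\Theta_\nu := \mathrm{d}C_\nu\,C_\nu^{-1}$. Because the monodromy, Stokes, and connection data are constant in the deformation parameters, the a priori multivalued factors in $Y_n(x)$ contribute nothing to $\mathrm{d}Y_n\,Y_n^{-1}$, so this $\Omega_n(x)$ is single-valued and rational in $x$; by the uniqueness built into Theorem~\ref{irregiso} it coincides with the $\Omega_n(x)$ produced there. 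The desired identity is then automatic: using $\mathrm{d}^2 Y_n = 0$ together with $\mathrm{d}(Y_n^{-1}) = -Y_n^{-1}\,\mathrm{d}Y_n\,Y_n^{-1}$ one computes
\begin{equation}
\mathrm{d}\Omega_n(x) = -\,\mathrm{d}Y_n\wedge\mathrm{d}(Y_n^{-1}) = \mathrm{d}Y_n\,Y_n^{-1}\wedge\mathrm{d}Y_n\,Y_n^{-1} = \Omega_n(x)\wedge\Omega_n(x),
\end{equation}
and identically for $\Theta_\nu$.

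The analytic content that must be justified, and which I expect to be the main obstacle, is that $\mathrm{d}Y_n\,Y_n^{-1}$ really is rational in $x$ with poles confined to $\{a_1,\ldots,a_N,\infty\}$ and of the orders permitted by \eqref{irregularform}. For a regular singular point this is a short computation from the local expansion \eqref{FormalSolnonFuchsian}: differentiating $Y_n = C_\nu \hat{Y}_{n,\nu}\exp\hat{T}_\nu$ in the parameters and multiplying on the right by $Y_n^{-1}$, the constancy of $T_{\nu,0}$ cancels the logarithmic branching and leaves a meromorphic germ. At an irregular singularity the same conclusion holds but is more delicate: one must use that the Stokes matrices $S_\nu^{(i)}$ and the formal monodromy are parameter-independent in order to show that $\mathrm{d}Y_n\,Y_n^{-1}$ extends across every Stokes sector to a single meromorphic function, with principal part governed by $\mathrm{d}\hat{T}_\nu$. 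Controlling the growth at $\infty$ by the same local analysis then bounds the pole orders, and a Liouville-type argument pins down $\Omega_n(x)$ as a rational matrix of the required shape.

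With both Maurer--Cartan identities established and the mixed condition \eqref{isomonomdromyconds} in hand, the exterior differential system generated by $\mathrm{d}Y_n = \Omega_n(x)Y_n$ and $\partial_x Y_n = A_n(x)Y_n$ is closed, so by the Frobenius theorem there passes through each point of the parameter space a unique integral manifold. Restricting the resulting flows to the individual deformation parameters $a_\nu$ and $t_{\nu,i}$ yields complete integrability in each of the listed variables, which is the assertion of the theorem.
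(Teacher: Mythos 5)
The paper does not actually prove this statement --- it is quoted verbatim from Jimbo--Miwa--Ueno \cite{Jimbo:Monodromy1} --- so your proposal has to be measured against the original argument, and there it has a genuine gap: circularity. Your reduction of Frobenius integrability to the Maurer--Cartan identities $\mathrm{d}\Omega_n(x) = \Omega_n(x)\wedge\Omega_n(x)$ and $\mathrm{d}\Theta_\nu = \Theta_\nu\wedge\Theta_\nu$, taken modulo the system \eqref{isomonomdromyconds}, is sound and is indeed the right first step. The problem is how you establish them. You ``impose the hypothesis'' that the monodromy, Stokes and connection data are constant under $\mathrm{d}$, and then set $\Omega_n(x) := \mathrm{d}Y_n(x)\,Y_n(x)^{-1}$. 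But no such hypothesis is available: the theorem is a statement about a system of nonlinear PDEs in the entries of the $A_{\nu,k}$ and $C_\nu$, and what it asserts is precisely that through every admissible initial condition there passes a local solution. By Theorem~\ref{irregiso}, a family $Y_n(x;t)$ with constant monodromy data through a given initial condition is equivalent to a solution of the deformation equations through that initial condition; assuming such a family exists, so that you may differentiate it and get Maurer--Cartan for free, assumes exactly what is to be proven. The identity $\mathrm{d}\Omega_n=\Omega_n\wedge\Omega_n$ must be verified as an identity on the phase space, with $\mathrm{d}A_n$ eliminated via the system itself, not along a hypothetical solution; as written, your argument only confirms the compatibility conditions at points already lying on isomonodromic families, which begs the question.

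There are two honest ways to close the gap, and it is worth contrasting them. The route of \cite{Jimbo:Monodromy1} is algebraic: the one-forms $\Omega_n(x)$, $\Theta_\nu$ are built from the formal local solutions \eqref{FormalSolnonFuchsian}, whose series coefficients are determined term by term, purely algebraically, from an arbitrary admissible $A_n(x)$ --- no isomonodromy assumption enters anywhere. One then computes $\mathrm{d}\Omega_n-\Omega_n\wedge\Omega_n$ modulo the deformation equations, checks from those formal expansions that its principal parts at each $a_\nu$ and its behavior at $\infty$ vanish, and concludes by a Liouville argument in $x$ that this rational $2$-form is identically zero. Note that your Liouville-type argument appears in the wrong place: you spend it on the rationality of $\mathrm{d}Y_n\,Y_n^{-1}$, which is anyway one direction of Theorem~\ref{irregiso} and not the main obstacle. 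The alternative, analytic repair is the Malgrange--Miwa route: for an arbitrary initial condition, freeze its monodromy data and \emph{construct} the family $Y_n(x;t)$ for nearby parameter values by solving the corresponding Riemann--Hilbert problem, with analytic dependence on $t$; then your Maurer--Cartan computation does apply. But that requires solvability of the deformed Riemann--Hilbert problem near the initial point and analyticity in parameters --- substantial Fredholm-theoretic machinery that your sketch nowhere supplies, and which is far heavier than the formal-series verification it would replace.
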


This means there is a continuous deformation in each of the $t_{i,j}$ for $j \geq 1$. For Painlev\'e equations we have an ideal of one-forms in the ring of differentials in one varible that is closed under external differentiation on isomonodromic deformations.

This Pfaffian system is defined by the collection of 1-forms
\begin{align}\label{Oneform}
\omega &= \sum_{i} \omega_i, \\
\omega_i &= - \mathrm{Res}_{x = a_i} \mathrm{Tr} \hat{Y}_i(x)\dfrac{\partial \hat{Y}_i(x)}{\partial x}(x) \mathrm{d} T_i(x),
\end{align}
which is closed on solutions of the isomonodromic deformations. A succinct form for the Hamiltonian is due to Krichever \cite{Krichever:Liouville}.

\begin{thm}[Theorem 2.1 in \cite{Krichever:Liouville}]
The non-linear equations isomonodromic deformations are Hamiltonian with respect to the Hamiltonians defined by
\begin{align}
H_{n,t_p} := \left. -\dfrac{1}{n+1}\mathrm{Tr} A(x)^{n+1} \right|_{x=t_p} .
\end{align}
\end{thm}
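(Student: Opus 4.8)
The plan is to realize the deformation equations of Theorem~\ref{irregiso} as Hamilton's equations for a natural symplectic form, with $H_{n,t_p}$ as the generating function. First I would fix the discrete data --- the pole positions $a_\nu$, their orders $r_\nu$, and the formal exponents $T_{\nu,k}$ --- and treat the remaining entries of the coefficients $A_{\nu,k}$ in \eqref{irregularform} as dynamical coordinates on a phase space $\mathcal{M}$, reduced by the gauge normalization $C_\infty = I$. The symplectic form I would use is $\Omega = \delta\omega$, the exterior derivative over $\mathcal{M}$ of the one-form \eqref{Oneform}; concretely this is a sum over the poles of residue pairings $\sum_\nu \mathrm{Res}_{x=a_\nu}\mathrm{Tr}(\delta A \wedge \hat Y_{n,\nu}^{-1}\delta \hat Y_{n,\nu})$, i.e.\ a sum over the singularities of the Kirillov--Kostant--Souriau forms on the local coadjoint orbits. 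The first task is to write $\Omega$ in these residue terms and confirm that it is nondegenerate after the reduction.

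Next I would make the $t_p$-deformation vector field $X_{t_p}$ explicit. By Theorem~\ref{irregiso} the flow is $\partial_{t_p}A = \partial_x\Omega_p + [\Omega_p,A]$, where $\Omega_p$ is the $dt_p$-component of the rationally computable matrix $\Omega_n(x)$. The engine of the whole theorem is a local identity: near the singularity carrying $t_p$, and after conjugating by $C_\nu$ and the formal series $\hat Y_{n,\nu}$ of \eqref{FormalSolnonFuchsian}, the matrix $A(x)$ is diagonalized by $\hat T_\nu'(x)$, so that $A(x)^{n+1}$ is, to leading orders, $\hat T_\nu'(x)^{n+1}$ in that frame and $\mathrm{Tr}\,A^{n+1}$ becomes a sum of powers of the local exponents $t_{\nu,i}$. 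Differentiating $-\tfrac{1}{n+1}\mathrm{Tr}\,A^{n+1}$ with respect to the entries of $A$ and isolating the Laurent coefficient conjugate to $t_p$ then reproduces exactly the principal part that defines $\Omega_p$. I would establish this matching by substituting the normal form and comparing Laurent coefficients order by order.

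The main computation is to pair $\delta H_{n,t_p}$ against $\Omega$. Using $\delta\,\mathrm{Tr}\,A^{n+1} = (n+1)\mathrm{Tr}(A^n\,\delta A)$, the differential of the Hamiltonian is a residue of $\mathrm{Tr}(A^n\,\delta A)$; substituting the zero-curvature expression for $\delta A$ along the flow, the commutator contribution drops because $\mathrm{Tr}(A^n[\Omega_p,A]) = 0$ by cyclicity, and the remaining derivative term is reorganized by integration by parts in $x$. The residue theorem on $\mathbb{P}^1$ --- that the total residue of a globally meromorphic one-form vanishes --- then collapses the sum over poles to the single residue defining $\iota_{X_{t_p}}\Omega$, yielding $\delta H_{n,t_p} = \iota_{X_{t_p}}\Omega$ up to the sign convention for $\Omega$. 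This is precisely the statement that the $t_p$-flow is Hamiltonian with Hamiltonian $H_{n,t_p}$.

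The step I expect to be the main obstacle is the local identification in the second paragraph: proving that the rationally constructed $\Omega_p$ agrees with the principal part of $A^n$ extracted from $\mathrm{Tr}\,A^{n+1}$. This demands uniform control of the formal expansions at both Fuchsian ($r_\nu = 0$) and irregular ($r_\nu \geq 1$) singularities, where $\hat Y_{n,\nu}(x)$ is only a formal series and the factor $\exp\hat T_\nu$ contributes to the expansion of the powers of $A$. A reassuring simplification is that the Stokes data $S_\nu^{(i)}$ never enters: $H_{n,t_p}$ is built solely from traces of $A$, which are single-valued and insensitive to the choice of Stokes sector, so only the formal local exponents $T_{\nu,k}$ and the regular parts of the $A_{\nu,k}$ appear. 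Once this matching and the nondegeneracy of $\Omega$ on the reduced space are in hand, the residue bookkeeping of the third paragraph finishes the proof.
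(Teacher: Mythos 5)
First, a caveat on the comparison: the paper does not prove this theorem at all --- it is imported verbatim from \cite{Krichever:Liouville}, and the only in-paper support is the consistency check that follows it, namely that for Fuchsian systems the coefficients of the characteristic polynomial reproduce Okamoto's Hamiltonians \eqref{tHamils}. So your proposal can only be judged against the cited source, and in that respect your architecture --- a residue/Kirillov--Kostant--Souriau-type symplectic form, contraction with the deformation field supplied by Theorem \ref{irregiso}, cyclicity of the trace to kill the commutator, and the vanishing of the total residue on $\mathbb{P}^1$ to collapse the bookkeeping --- is exactly the strategy of Krichever and Krichever--Phong \cite{Krichever:Liouville, KricheverPhoung}. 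The outline is the right one.

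However, the step you yourself call the engine of the proof is wrong as stated, and if it were literally true the theorem would be vacuous. Conjugating $A$ by $C_\nu \hat{Y}_{n,\nu}$ does not diagonalize it into $\hat{T}_\nu'(x)$: from \eqref{FormalSolnonFuchsian} one gets
\[
\left(C_\nu \hat{Y}_{n,\nu}\right)^{-1} A \left(C_\nu \hat{Y}_{n,\nu}\right) = \hat{T}_\nu'(x) + \hat{Y}_{n,\nu}^{-1}\,\partial_x \hat{Y}_{n,\nu},
\]
and the second, regular term cannot be discarded. If $\mathrm{Tr}\,A^{n+1}$ really reduced to ``a sum of powers of the local exponents $t_{\nu,i}$,'' then after extracting the Laurent coefficient attached to $t_p$ the Hamiltonian $H_{n,t_p}$ would depend only on the frozen data $(a_\nu, T_{\nu,k})$; it would be a constant on your phase space $\mathcal{M}$, would generate the zero flow, and would contradict your own third paragraph where $\delta H_{n,t_p}\neq 0$. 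The actual content lives in the cross terms: expanding $\mathrm{Tr}\bigl(\hat{T}_\nu' + \hat{Y}_{n,\nu}^{-1}\partial_x\hat{Y}_{n,\nu}\bigr)^{n+1}$, the pure power $\mathrm{Tr}\,(\hat{T}_\nu')^{n+1}$ is the irrelevant constant, while the dynamical part is $(n+1)\,\mathrm{Tr}\bigl((\hat{T}_\nu')^{n}\,\hat{Y}_{n,\nu}^{-1}\partial_x\hat{Y}_{n,\nu}\bigr)$ plus lower-order terms in $\hat{T}_\nu'$. These cross terms are precisely what assemble into the one-form \eqref{Oneform} (with $\hat{Y}_i^{-1}\partial_x\hat{Y}_i$ understood there), and it is they --- not the diagonal powers --- that must be matched against $\Omega_p$; for the same reason, the evaluation $|_{x=t_p}$ in the statement must be read as extraction of a residue or Laurent coefficient at the singularity carrying $t_p$, where $A$ has a pole, e.g. $H_j=\mathrm{Res}_{x=a_j}\tfrac12\mathrm{Tr}\,A(x)^2$ recovers \eqref{tHamils}. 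A related conflation: \eqref{Oneform} is a one-form in the deformation parameters only (it pairs $\hat{Y}^{-1}\hat{Y}_x$ with $\mathrm{d}\hat{T}$), so ``$\delta$ of \eqref{Oneform} over $\mathcal{M}$'' is not yet a symplectic form on $\mathcal{M}$; the residue pairing you wrote must be taken as the definition, as in \cite{KricheverPhoung}. With the diagonalization claim replaced by the cross-term identity above, your third-paragraph computation goes through essentially as written.
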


These Hamiltonians also appear as the coefficients of the characteristic equations. This theorem is reminiscent of the theory of invariants for discrete autonomous integrable mappings arising as reductions of partial difference equations \cite{vdk:staircase}. When $r_\nu =0$ for $\nu = 1, \ldots, N$ (and $r_{\infty} = 0$), \eqref{irregularform} defines a Fuchsian system whose isomonodromic deformations is a Hamiltonian system with respect to the Hamiltonians
\begin{equation}\label{tHamils}
H_j = \sum_{k \neq j \neq \infty} \dfrac{\mathrm{Tr(A_{j,0}A_{k,0})}}{a_j-a_k}.
\end{equation}
This description is due to Okamoto \cite{Okamoto:Hamiltonian}. A simple expansion shows how these Hamiltonians appear in the coefficients of $\lambda$ in the characteristic equation
\begin{align*}
\Gamma(\lambda) =& \lambda^m - \lambda^{m-1} \mathrm{Tr} A(x) \\
&\hspace{1cm} + \left( \sum_j \dfrac{1}{x-a_j} \sum_{k \neq j \neq \infty} \dfrac{\mathrm{Tr(A_{j,0}A_{k,0})}}{a_j-a_k}. \right) \lambda^{m-2}  + O(\lambda^{m-3}).
\end{align*}
More generally, the coefficients of the characteristic equations are expressible in terms of the determinants and traces of the $A_{i,j}$ and these Hamiltonians.

\subsection{Schlesinger transformations and spectral curves}

The aim of this section is to provide a way computing \eqref{Specn}. For systems of differential equations, from \eqref{irregmonodromy} it is easy to see that an integer shift in any collection of the entries of the $T_{\nu,0}$ results in the same monodromy matrices. If we identify an collection of integer shifts in the entries of $T_{\nu,0}$ with the shift $n \to n+1$, we may use \eqref{FormalSolnonFuchsian} to compute $R_n(x) = Y_{n+1}(x)Y_n(x)^{-1}$ \cite{Jimbo:Monodromy2, SakkaMugan:P6}. We need to specify what the discrete analogue of the formal solutions in \eqref{FormalSolnonFuchsian} to calculate $R_n(x)$ for systems of difference equations.

If $\Delta_x$ is specified by \eqref{op:h} and $A_n(x)$ is rational, multiplying $Y_n(x)$ by gamma functions allows us to express $A_n(x)$ in polynomial form as
\begin{equation}\label{diffAForm}
A_n(x) = A_0 + A_1 x + \ldots + A_N x^N,
\end{equation}
where the $A_i$ are constant in $x$. For systems of difference equations, we may use the formal solution specified by the following theorem.

\begin{thm}\label{hdiffseries}
If $A_N = \mathrm{diag}(\kappa_1, \ldots, \kappa_m)$ where 
\[
\kappa_i \neq 0, \hspace{.5cm} i = 1, \ldots, m, \hspace{2cm} \kappa_i/\kappa_j \notin \mathbb{R}, \hspace{.5cm} i \neq j,
\]
then there exists unique fundamental solutions of \eqref{Specx}, $Y_{-\infty}(x)$ and $Y_{\infty}(x)$, of the form
\begin{equation}\label{diffform}
Y_{\pm\infty}(x) = x^{Nx}e^{-Nx} \left( Y_0 + \dfrac{Y_1}{x} + \dfrac{Y_2}{x^2} + \ldots \right) \mathrm{diag} \left( \kappa_1^{x} x^{r_1}, \ldots, \kappa_{m}^{x}x^{r_m}\right)
\end{equation}
such that
\begin{enumerate}
\item{$Y_{\infty}(x)$ and $Y_{-\infty}(x)$ are analytic throughout the complex plane, except at possibly integer multiples of $h$ to the left and right of the roots of $A_n(x)$ respectively.}
\item{$Y_{\infty}(x)$ and $Y_{-\infty}(x)$ are asymptotically represented by \eqref{diffAForm}.}
\end{enumerate}
\end{thm}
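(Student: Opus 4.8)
The plan is to split the construction into a \emph{formal} part and an \emph{analytic} part, the latter following the classical analytic theory of linear difference systems due to Birkhoff. After rescaling $x\mapsto x/h$ we may assume $h=1$, so that \eqref{Specx} reads $Y_n(x+1)=A_n(x)Y_n(x)$ with $A_n(x)$ as in \eqref{diffAForm} and $A_N=\mathrm{diag}(\kappa_1,\dots,\kappa_m)$.

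First I would construct the formal solution. Substituting the ansatz $Y_n(x)=F(x)P(x)K(x)$, where $F(x)=x^{Nx}e^{-Nx}$ is scalar, $P(x)=\sum_{k\ge 0}Y_k x^{-k}$ and $K(x)=\mathrm{diag}(\kappa_i^x x^{r_i})$, and dividing through by $F(x)$, the leading growth $A_n(x)\sim A_N x^N$ is matched by the factor $F(x+1)/F(x)=x^{N}(1+O(1/x))$, whose lower-order coefficients come from the Stirling expansion of the underlying Gamma-type factor. Comparing coefficients of $x^{-k}$ then yields a recursion for the $Y_k$. At order $x^0$ one finds $Y_0\,\mathrm{diag}(\kappa_i)=\mathrm{diag}(\kappa_i)\,Y_0$, forcing $Y_0$ to be diagonal, normalized to $Y_0=I$; consistency of the diagonal recursion at first subleading order fixes the exponents $r_i$ in terms of the diagonal entries of $A_{N-1}$ and $A_N$. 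At each higher order the \emph{off-diagonal} part of $Y_k$ is determined by the operator $E\mapsto \mathrm{diag}(\kappa_i)E-E\,\mathrm{diag}(\kappa_i)$, which is invertible on off-diagonal matrices precisely because the $\kappa_i$ are distinct; the diagonal parts are then fixed by the normalization. This is where the hypothesis $\kappa_i/\kappa_j\notin\mathbb{R}$ (in particular $\kappa_i\neq\kappa_j$) first enters, and it guarantees that the formal series exists and is unique.

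Next I would promote the formal series to genuine analytic solutions. Truncating $P(x)$ at order $p$ gives an approximant solving the equation up to relative error $O(x^{-p-1})$; writing the true solution as this approximant times $I+\varepsilon(x)$ converts \eqref{Specx} into an inhomogeneous equation $\varepsilon(x+1)=B(x)\varepsilon(x)+g(x)$ with $g(x)=O(x^{-p-1})$ and $B(x)$ asymptotic, entrywise, to the ratios $\kappa_i/\kappa_j$. The solution analytic as $\Re x\to+\infty$ is then obtained by summing the resulting Neumann-type series (equivalently by a contraction argument, or by a factorial-series / Laplace representation as in Birkhoff's treatment) on a half-plane $\Re x>L$. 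The condition $\kappa_i/\kappa_j\notin\mathbb{R}$ is essential here: it ensures that the exponential factors $\kappa_i^x$ have strictly separated growth rates throughout the half-plane, so that no anti-Stokes direction enters the region, the off-diagonal couplings decay, and the iteration converges uniformly. Having constructed $Y_\infty$ on $\Re x>L$, the relation $Y_\infty(x)=A_n(x)^{-1}Y_\infty(x+1)$ continues it meromorphically to the whole plane, introducing poles only at the roots of $\det A_n(x)$ and their integer left-translates; the solution $Y_{-\infty}$ is obtained symmetrically by summation as $\Re x\to-\infty$ together with forward continuation, giving poles to the right of the roots.

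Finally, uniqueness follows from periodicity: if $Y$ and $\tilde Y$ both solve the equation, then $P(x)=Y(x)^{-1}\tilde Y(x)$ satisfies $P(x+1)=P(x)$, and if both are asymptotic to the same formal series then $P(x)\to I$ across a full period strip, so a Phragm\'en--Lindel\"of argument forces $P\equiv I$. I expect the \textbf{main obstacle} to be the analytic existence step: the formal series generically diverges, so the content of the theorem lies in producing an actual solution with the stated asymptotics and in controlling its domain of analyticity, which is exactly where the genericity condition $\kappa_i/\kappa_j\notin\mathbb{R}$ must be used in full strength to rule out the coalescence of growth rates that would obstruct the summation.
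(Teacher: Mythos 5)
The first thing to note is that the paper contains no proof of this theorem at all: it is stated as imported background, and the surrounding text points to the classical analytic theory of linear difference equations (Birkhoff \cite{Birkhoff:Difference}; the precise variant with the hypothesis $\kappa_i/\kappa_j\notin\mathbb{R}$ is the one established in Krichever's treatment \cite{Krichever:Analyticdiff}, which is in the bibliography). So there is no in-paper argument to compare against; what you have done is reconstruct the skeleton of the classical proof that the citation stands for, and your architecture --- formal series by coefficient matching, then analytic realization on half-planes, then meromorphic continuation via the equation itself, then uniqueness via periodicity --- is indeed the standard one. Your formal part is essentially right: the order-$x^0$ equation forces $Y_0$ diagonal, the first subleading order fixes the $r_i$ from the diagonal of $A_{N-1}$, and invertibility of $E\mapsto KE-EK$ on off-diagonal matrices (using $\kappa_i\neq\kappa_j$) drives the recursion. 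One bookkeeping slip: the diagonal entries of $Y_k$, $k\geq 1$, are not ``fixed by the normalization''; since the commutator kills diagonals, they are determined by consistency of the recursion at the next order (where they enter with the nonzero coefficient $-k\kappa_i$), and only $Y_0=I$ is a free normalization.

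Two more substantive caveats. First, the analytic existence step --- producing a genuine solution asymptotic to the generically divergent series on a half-plane --- is only named (``contraction, or factorial-series/Laplace as in Birkhoff''), not executed; that summation argument is the entire content of the theorem, so as a standalone proof your proposal is a correct outline with its hardest component deferred to the same literature the paper defers to. Second, your account of where $\kappa_i/\kappa_j\notin\mathbb{R}$ acts is off. The heuristic that ``no anti-Stokes direction enters the region'' is not correct: the lines $\left|(\kappa_i/\kappa_j)^x\right|=1$ are vertical precisely when the ratio is real positive, so for non-real ratios these lines \emph{do} cross every right half-plane. The actual mechanism is in the uniqueness step, and your Phragm\'en--Lindel\"of sketch needs repair there: the periodic matrix $P=Y^{-1}\tilde Y$ does not itself tend to $I$; rather $K(x)P(x)K(x)^{-1}\to I$ with $K(x)=\mathrm{diag}(\kappa_i^{x}x^{r_i})$, so one must show that the entries $(\kappa_j/\kappa_i)^x P_{ij}(x)$ vanish. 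Expanding $P_{ij}$ in Fourier modes $e^{2\pi i kx/h}$, a mode times $(\kappa_j/\kappa_i)^x$ can decay on a full half-plane (which contains both vertical directions) only if the ratio is real positive; the hypothesis excludes exactly this, killing every mode including the constant one and thereby eliminating the Stokes-type ambiguity. That Fourier-mode argument, rather than convergence of the iteration, is where the hypothesis does its essential work, and spelling it out is what would close your uniqueness step.
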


For systems of $q$-difference equation, we may use $q$-Gamma functions (see \cite{GasperRahman} for example) to reduce the case in which $A_n(x)$ is rational to one in which $A_n(x)$ is polynomial, and hence, is also given by \eqref{diffAForm}.

\begin{thm}\label{qdiffseries}
If $A_0$ and $A_N$ are semisimple with eigenvalues $\theta_1, \ldots, \theta_m$ and $\kappa_1, \ldots, \kappa_m$ respectively, with
\begin{equation}\label{formconst}
\dfrac{\lambda_i}{\lambda_j}, \dfrac{\kappa_i}{\kappa_j} \notin q^{\mathbb{N}^+}, \hspace{1cm} \forall i,j
\end{equation}
then we have formal solutions
\begin{subequations}\label{formalq}
\begin{align}
Y_0(x) &= \widehat{Y}_0(x)\mathrm{diag}\left( e_{q,\lambda_1}(x)\right)\\
Y_{\infty}(x) &= \widehat{Y}_{\infty}(x)\mathrm{diag}\left( \theta_{q}(x/q)^{-N} e_{q,\lambda_i}(x)\right)
\end{align}
\end{subequations}
where $\widehat{Y}_0(x)$ and $\widehat{Y}_\infty(x)$ are series around $x=0$ and $x=\infty$ respectively.
\end{thm}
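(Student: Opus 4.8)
The plan is to treat the two singular points symmetrically: construct $Y_0(x)$ as a formal series in ascending powers of $x$ and $Y_\infty(x)$ as a formal series in descending powers of $x$, choosing the diagonal factors in \eqref{formalq} precisely so that they absorb the leading $q$-shift behaviour of the system and reduce each problem to an inhomogeneous recursion for a formal power series whose coefficient operators are invertible at every order. Throughout I write the system in the form $Y_n(qx) = A_n(x)Y_n(x)$ with $A_n(x) = A_0 + \cdots + A_N x^N$ as in \eqref{diffAForm}, and I denote the eigenvalues of $A_0$ by $\lambda_1,\ldots,\lambda_m$ so that the notation matches the hypothesis \eqref{formconst} and the factor $e_{q,\lambda_i}$.

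First I would treat $x=0$. Since $A_0$ is semisimple, choose a constant $P_0$ with $P_0^{-1}A_0P_0 = \Lambda_0 = \mathrm{diag}(\lambda_1,\ldots,\lambda_m)$ and absorb $P_0$ into the unknown series, so that without loss of generality $A_0 = \Lambda_0$ is diagonal. Taking the diagonal factor $D_0(x) = \mathrm{diag}(e_{q,\lambda_i}(x))$ to satisfy $D_0(qx) = \Lambda_0 D_0(x)$ and substituting $Y_0(x) = \widehat{Y}_0(x)D_0(x)$, the factor $D_0(x)$ cancels on the right and leaves
\[
\widehat{Y}_0(qx)\Lambda_0 = A_n(x)\widehat{Y}_0(x).
\]
Writing $\widehat{Y}_0(x) = \sum_{k\ge 0} Y^{(k)}x^k$ with $Y^{(0)} = I$ and comparing coefficients of $x^k$ gives, for each $k \ge 1$,
\[
q^k Y^{(k)}\Lambda_0 - \Lambda_0 Y^{(k)} = \sum_{j=1}^{k} A_j Y^{(k-j)},
\]
whose right-hand side involves only the previously determined coefficients $Y^{(0)},\ldots,Y^{(k-1)}$. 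Entrywise this reads $(q^k\lambda_b - \lambda_a)Y^{(k)}_{ab} = (\text{known})_{ab}$, so each $Y^{(k)}$ is uniquely determined as soon as $q^k\lambda_b \ne \lambda_a$ for all $a,b$ and all $k\ge 1$. This is exactly the non-resonance hypothesis \eqref{formconst}, namely $\lambda_i/\lambda_j \notin q^{\mathbb{N}^+}$ (the diagonal entries being covered by $q^k \ne 1$), and it yields a unique formal series $\widehat{Y}_0(x)$.

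The behaviour at $x=\infty$ is entirely analogous once the diagonal factor is identified. Writing $A_n(x) = A_N x^N\bigl(I + O(x^{-1})\bigr)$ and diagonalising the semisimple leading coefficient $A_N = \mathrm{diag}(\kappa_i)$ (again after a constant gauge absorbed into the series), the role of $D_0$ is played by the diagonal factor $E(x)$ in \eqref{formalq}. Here the $q$-exponential part must carry the leading eigenvalues $\kappa_i$ of $A_N$, while the power $\theta_q(x/q)^{-N}$ supplies the polynomial growth: using the functional equation $\theta_q(qx) = c\,x\,\theta_q(x)$ with $c$ constant, the exponent $-N$ and the shift of the argument by $q$ are engineered so that $E(qx)E(x)^{-1} = \mathrm{diag}(\kappa_i x^N)$ to leading order, matching the growth dictated by $A_N x^N$. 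Substituting $Y_\infty(x) = \widehat{Y}_\infty(x)E(x)$ and expanding $\widehat{Y}_\infty(x) = \sum_{k\ge 0} Y^{(k)}_\infty x^{-k}$ then produces a descending recursion of the same shape, now governed by factors of the form $q^{-k}\kappa_b - \kappa_a$, which are invertible precisely because $\kappa_i/\kappa_j \notin q^{\mathbb{N}^+}$.

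I expect the main obstacle to be the bookkeeping of the diagonal factor at infinity: one must verify that the combination $\theta_q(x/q)^{-N}e_{q,\kappa_i}(x)$ reproduces the full leading $q$-shift $A_N x^N$, including the stray powers of $q$ that the theta functional equation introduces, so that the residual equation for $\widehat{Y}_\infty$ is genuinely a power-series recursion with an invertible leading operator rather than merely matching at top order. Once the normalisations of $\theta_q$ and $e_{q,\lambda}$ are pinned down this step is mechanical, and since only \emph{formal} (not convergent) series are claimed, no $q$-summability or analytic-continuation argument is required; the non-resonance conditions \eqref{formconst} do all the work by keeping the coefficient operators invertible at every order.
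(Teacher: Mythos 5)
Your argument is correct, but it is worth noting that the paper itself offers \emph{no} proof of this theorem: it is quoted as classical background, with the surrounding text recording only the functional equations $qx\,\theta_q(qx)=\theta_q(x)$ and $e_{q,c}(qx)=c\,e_{q,c}(x)$ and pointing to Birkhoff--Guenther, Adams, and Praagman for generalizations. What you have supplied is essentially the standard classical construction (going back to Carmichael and Birkhoff): diagonalize the relevant leading coefficient by a constant gauge, peel off the diagonal factor solving the model equation, and determine the residual series order by order, using non-resonance to invert the entrywise operators $Z \mapsto q^k Z\Lambda - \Lambda Z$. Three remarks. First, the ``main obstacle'' you flag at infinity in fact closes exactly, not just to leading order: from $qx\,\theta_q(qx)=\theta_q(x)$ one gets $x\,\theta_q(x)=\theta_q(x/q)$, hence for $E(x)=\mathrm{diag}\left(\theta_q(x/q)^{-N}e_{q,\kappa_i}(x)\right)$ one has $E(qx)=\mathrm{diag}\left(\kappa_i x^N\right)E(x)$ with no stray powers of $q$, so the recursion for $\widehat{Y}_\infty$ has precisely the same Sylvester shape as at the origin, with factors $q^{-k}\kappa_b-\kappa_a$. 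Second, your handling of the diagonal entries needs $q^k\neq 1$ for all $k\geq 1$, i.e.\ $q$ not a root of unity (customarily $0<|q|<1$); this standing assumption is not literally contained in \eqref{formconst} and should be stated. Third, you silently repaired the statement's notational slip (the eigenvalues of $A_0$ are called $\theta_i$ in the hypothesis but $\lambda_i$ in \eqref{formconst} and in the solution), which is the correct reading; your proof is a legitimate and self-contained substitute for the citation the paper relies on.
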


The functions $\theta_q(x)$ and $e_{q,c}(x)$ in this theorem satisfy
\[
qx \theta_q(qx) = \theta_q(x), \hspace{3cm} e_{q,c}(qx) = c e_{q,c}(x).
\]
There is a generalization of this symbolic form in cases in which some of the eigenvalues are $0$ in the work of Birkhoff and Guenther \cite{BirkhoddAdamsSum}, and when the \eqref{formconst} is not satisfied by Adams \cite{Adams}. A cleaner and even more general existence theorem based on vector bundles on Riemann surfaces is due to Praagman \cite{Praagman}. The difference analogue of the monodromy matrices is considered to be the (Birkhoff's) connection matrix, which is a invariant under $\Delta_x$ that relates the two formal series solutions \cite{Birkhoff:Difference, Birkhoff:qDifference}.

Since in both cases $A_n(x)$ is polynomial, we write
\[
\det A_n(x) = \prod_{j=1}^{m} \kappa_j \prod_{i=1}^{mN} (x-a_i),
\]
where $a_i \neq 0$. This expression in the $q$-difference case gives us a relation between the $\theta_i$'s, $\kappa_j$'s and the $a_k$'s. Just as the $M_i$ were periodic in the values of $T_{\nu,0}$ the differential case, the connection matrices are periodic or quasi-periodic (i.e., $f(a) = f(qa)$) in the $\theta_i$'s, $\kappa_j$'s and the $a_k$'s. The way in which the discrete Painlev\'e equations arise is that we associate a shift in a collection of the periodic or quasi-periodic variables with the transformation $n \to n+1$. 

\begin{thm}
Given a system of the form \eqref{Specx}, a discrete isomonodromic deformation is governed by 
\begin{equation}\label{deformation}
Y_{n+1}(x) = R_n(x)Y_n(x).
\end{equation}
\end{thm}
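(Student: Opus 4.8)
The plan is to read the defining relation \eqref{deformation} as a \emph{construction} of $R_n(x)$ rather than an assumption: I would set $R_n(x) := Y_{n+1}(x)Y_n(x)^{-1}$, where $Y_{n+1}(x)$ denotes the fundamental solution attached to the system obtained from \eqref{Specx} by advancing the designated monodromy (or connection) data by one unit of its period lattice. With this definition \eqref{deformation} holds tautologically and \eqref{Specn} becomes the rewriting $\Delta_n Y_n(x) = Y_{n+1}(x) = R_n(x)Y_n(x)$; the real content is that $R_n(x)$ is a genuine meromorphic (indeed rational, or polynomial after normalisation) matrix depending on $x$ alone, and that it implements an isomonodromic deformation. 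The argument splits according to the three cases \eqref{op:diff}, \eqref{op:h}, \eqref{op:q}.

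First I would establish that $R_n(x)$ is single-valued, i.e.\ invariant under analytic continuation (resp.\ under $\Delta_x$). In the differential case this follows from Theorem~\ref{irregiso} together with \eqref{irregmonodromy}: an integer shift in the entries of $T_{\nu,0}$ leaves $\exp(2\pi i T_{\nu,0})$, and hence every $M_\nu$ and every Stokes matrix, unchanged, so $Y_n$ and $Y_{n+1}$ carry identical monodromy and the monodromy factors cancel in the product $Y_{n+1}Y_n^{-1}$. In the difference cases I would invoke Theorems~\ref{hdiffseries} and~\ref{qdiffseries}: the formal solutions \eqref{diffform} and \eqref{formalq} are pinned down up to the explicit factors $\kappa_i^{x}x^{r_i}$, $e_{q,\lambda_i}(x)$ and $\theta_q(x/q)^{-N}$, each of which is $\Delta_x$-(quasi)periodic, so Birkhoff's connection matrix relating the two canonical solutions is (quasi-)periodic in the shifted parameters; advancing those parameters by one lattice unit preserves it, making $R_n(x)$ honestly single-valued under $\Delta_x$.

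Next I would pin down the pole structure to conclude rationality. Comparing the formal expansions of $Y_{n+1}$ and $Y_n$ at each singular point, the divergent series parts agree to leading order and cancel in $R_n = Y_{n+1}Y_n^{-1}$, while the exponent-shift factors contribute only finitely many poles and zeros of bounded order; combined with single-valuedness this forces $R_n(x)$ to be rational in the differential case and, after the gamma/$q$-gamma normalisation that reduces $A_n(x)$ to the polynomial form \eqref{diffAForm}, polynomial in the difference cases. The non-resonance hypotheses, namely the conditions on the $t_{\nu,i}$ imposed earlier and \eqref{formconst}, together with the semisimplicity of the leading coefficients, are exactly what is needed to exclude spurious accumulation of singularities and guarantee this finiteness.

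Finally, compatibility delivers the deformation \eqref{intro:isomonodromy}. Cross-applying $\Delta_n$ to \eqref{Specx} and $\Delta_x$ to \eqref{Specn} and using $Y_{n+1}=R_nY_n$ yields, in the differential case, $A_{n+1}=(R_n'+R_nA_n)R_n^{-1}$, and in the difference cases $A_{n+1}(x)=R_n(x+h)A_n(x)R_n(x)^{-1}$ (resp.\ $R_n(qx)A_n(x)R_n(x)^{-1}$ for $\sigma_q$); either way $A_n\mapsto A_{n+1}$ is the announced discrete isomonodromic deformation. The main obstacle is the rationality step: showing that the explicit exponential, power and theta factors in the formal solutions conspire to cancel and leave a matrix with only finitely many controlled singularities, which is precisely where the asymptotic theorems and the eigenvalue hypotheses do the essential work.
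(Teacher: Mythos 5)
Your proposal is correct and follows essentially the same route as the paper, which in fact states this theorem without a formal proof: the justification given there is precisely your argument, namely that the monodromy data \eqref{irregmonodromy} (resp.\ the connection matrices) are invariant under integer (resp.\ periodic or quasi-periodic) shifts of the exponents, that $R_n(x) = Y_{n+1}(x)Y_n(x)^{-1}$ is then computed from the formal solutions \eqref{FormalSolnonFuchsian}, \eqref{diffform} and \eqref{formalq}, and that cross-compatibility yields \eqref{comp} and hence the map \eqref{intro:isomonodromy}. The only caveat, shared equally by the paper, is that both treatments presuppose the existence of the shifted system $Y_{n+1}(x)$ (i.e.\ solvability of the relevant Riemann--Hilbert or Birkhoff inverse problem) rather than proving it.
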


We may compute $R_n(x)$ using \eqref{diffform} and \eqref{formalq} to give \eqref{Specn}. Using \eqref{Specx} and \eqref{deformation}, we obtain the compatibility when we require the solutions satisfy $\Delta_x \Delta_n Y_n(x) = \Delta_n \Delta_x Y_n(x)$. For the cases \eqref{op:diff}, \eqref{op:h} and \eqref{op:q} we have
\begin{subequations}\label{comp}
\begin{align}
A_{n+1}(x)R_n(x) &= R_n(x)A_n(x) + \dfrac{\mathrm{d}R(x)}{\mathrm{d}x},\\
A_{n+1}(x)R_n(x) &= R_n(x+h)A_n(x),\\
A_{n+1}(x)R_n(x) &= R_n(qx)A_n(x),
\end{align}
\end{subequations}
which may be solved for $A_{n+1}(x)$ to induce a map of the form \eqref{intro:isomonodromy}. 

We now turn to why we are drawn to \eqref{dHamils}. In the differential setting for isospectral deformations the Hamiltonians were connected to the spectral curve. In the difference setting, it is the invariants are connected to the spectral curve \cite{vdk:staircase}. These invariants and Hamiltonians essentially play the same role. We seek a discrete evolution on the spectral curve that is linear on the Jacobian of the curve. In the cases we consider, the characteristic equation defines a biquadratic curve, hence, we consider an action defined by the group law on biquadratics. 

If $\Gamma$ defines a fibration of the plane by biquadratics in coordinates $(y,z) \in \mathbb{P}_1^2$, then the QRT map is given by \eqref{dHamils} where $\Gamma =\hat{\Gamma} = \tilde{\Gamma}$. A fundamental result of Tsuda is that if we embed biquadratic fibres in $\mathbb{P}_2$ as a cubic plane curves, the QRT map admits the description
\begin{equation}\label{QRT:elliptic}
\hat{Q} + P_9 = Q + P_8,
\end{equation}
where $P_9$ and $P_8$ are the images of $y = \infty$ and $z=\infty$. We have depicted this in Figure \ref{QRTfigure}. These points are two of nine base points \cite{Tsuda:QRT}. In particular, when we identify the elliptic curve with its Jacobian the action of the QRT map is discrete and linear. 

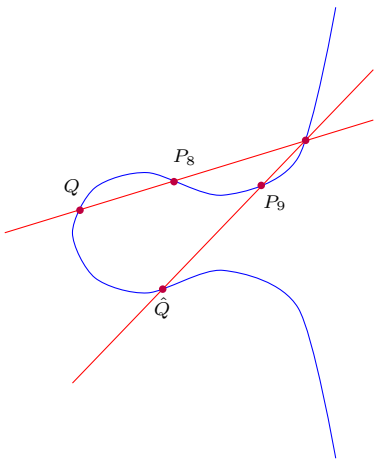
\begin{figure}[!ht]
\begin{tikzpicture}[scale=1]
\draw [blue] plot [smooth] coordinates {(3.5,-3)(3,-1)(2,-.5)(1,-.8)(.3,-.6)(0,0)(.3,.6)(1,.8) (2,.5)(3,1)(3.5,3)};
\draw[red] (-.9,0) -- (4,1.5);
\draw[red] (0,-2) -- (4,2.17);
\fill[purple] (.1,.3) circle (.05);
\fill[purple] (1.35,.68) circle (.05);
\fill[purple] (3.1,1.23) circle (.05);
\fill[purple] (1.2,-.75) circle (.05);
\fill[purple] (2.51,.63) circle (.05);
\node at (0,.6) {${{}_{Q}}$};
\node at (1.5,1) {${}_{P_8}$};
\node at (2.7,.4) {${}_{P_9}$};
\node at (1.2,-1) {${}_{\hat{Q}}$};
\end{tikzpicture}
\caption{The geometric interpretation of the QRT map. \label{QRTfigure}}
\end{figure}

The discrete Painlev\'e equations are nonautonomous versions of the QRT maps, but they are not directly associated with a fibration of the plane by biquadratic curves, but rather an intermediate fibration. The evolution of the discrete Painlev\'e equations may be described in terms of moving biquadratic curves \cite{Kajiwara:Elliptic} in the following way; let $P_1, \ldots, P_9$ be $9$ points in $\mathbb{P}_2$ and $\Gamma_0$ the unique cubic curve passing through these points. We let $T$ be the birational mapping that fixes $P_1, \ldots, P_7$ and sends $P_8$ and $P_9$ to $\hat{P}_8$ and $\hat{P}_9$ respectively in accordance with 
\begin{align*}
&P_1 + \ldots + P_7 + P_8 + \hat{P}_9 = 0,\\
&P_8 + P_9 = \hat{P}_8 + \hat{P}_9,
\end{align*}
which interpreted in terms of the group law $\Gamma_0$. In this setting, the Painlev\'e variables, encoded in the point $Q = [y:z:1]$, is sent to $\hat{Q}$, which is determined by
\begin{equation}\label{dPainleve:elliptic}
\hat{Q} + \hat{P}_9 = \hat{Q}+P_8,
\end{equation}
on the fibre containing $Q$ in the fibration by cubic plane curves with base points $P_1, \ldots, P_8,\hat{P}_9$. This elegant description of the discrete Painlev\'e equations is the non-autonomous variant of \eqref{QRT:elliptic}. What distinguishes the QRT maps is that $P_1 + \ldots + P_9 = 0$ on $\Gamma_0$. If the autonomous limits of the discrete Painlev\'e equations as QRT maps were to make any sense, $P_8$ and $P_9$ need to be chosen on the fibre of \eqref{dPainleve:elliptic} in the same way as the QRT case. This also explains why we seek an intermediate curve, $\tilde{\Gamma}(x)$, since a priori the points $P_1,\ldots, P_8, \hat{P}_9$ are not associated with the parameter values of either $A_n(x)$ or $A_{n+1}(x)$. Naturally, any valid choice of basepoints should give a Schlesinger transformation, which seems a natural geometric setting for the Schlesinger transformations. We do not pursue this here, but this seems linked to the setting of Rains \cite{Rains:Hitchin}.

\section{Examples}

We have tried to capture the above theory in a set of examples that demonstrate the principles. We start with something simple to demonstrate the mechanics, then we choose three associated linear problems that are regular in the sense of the theorems provided. We consider the sixth Painlev\'e equation \cite{SakkaMugan:P6}, the $q$-analogue of the sixth Painlev\'e equation \cite{Sakai:qP6} and the discrete analogue of the sixth Painlev\'e equations \cite{AB06}.

\subsection{Schlesinger transformations for the second Painlev\'e equation}

The second Painlev\'e equation arises an isomonodromic deformation of an irregular system of linear differential equations \cite{FlaschkaNewell}. It is an illustrative example since there is only one parameter involved. This makes it simple to determine the change in parameter required in \eqref{dHamils}.

The second Painlev\'e equation may be written as 
\begin{equation}\label{PII}
\dfrac{\mathrm{d}^2 y}{\mathrm{d}t^2} = 2 y^3 + y t + \alpha.
\end{equation}
The associated linear problem for \eqref{PII} is given by
\begin{equation}\label{P2:linear}
\dfrac{\mathrm{d}Y_n(x)}{\mathrm{d}x} = \left(\begin{pmatrix} 1 & 0\\ 0 & -1\end{pmatrix} x^2 + \begin{pmatrix} 0 & u  \\ \dfrac{2z}{u} & 0\end{pmatrix}x+\begin{pmatrix} \dfrac{t}{2}-z & -u y\\ \\ \dfrac{2\theta+2yz}{u} & z-\dfrac{t}{2}\end{pmatrix} \right) Y_n(x),
\end{equation}
where $\alpha = \theta-1/2$. The isomondromic deformation may be written as the compatibility of \eqref{P2:linear} with
\begin{align*}
\dfrac{\mathrm{d}Y_n(x)}{\mathrm{d}t} = \dfrac{1}{2}\left(\begin{pmatrix} 1 & 0\\ 0 & -1\end{pmatrix} x + \begin{pmatrix} 0 & u  \\ \dfrac{2z}{u} & 0\end{pmatrix}  \right) Y_n(x).
\end{align*}
By computing entries of the compatibility, we obtain
\begin{subequations}\label{Hamils}
\begin{align}
y' &= z - y^2 - \dfrac{t}{2},\\
z' &=  2yz+ \theta,
\end{align}
\end{subequations}
whose equivalence to \eqref{PII} is easily verified. 

We consider the characteristic equation for $A_n(x)$,
\begin{align}
\Gamma(\lambda,x) = &\left( \lambda ^2-\frac{1}{4} \left(t+2 x^2\right)^2-2 \theta  x\right)+\dfrac{1}{2}\left( \dfrac{z(t-z)}{2} + y^2z +\theta y\right),
\end{align}
where we have bracketed terms that depend on the Painlev\'e variables, $y$ and $z$, and those that do not. Using Theorem \ref{irregiso} for systems of linear differential equations with irregular singular points we obtain the following Hamiltonian description of the evolution.

\begin{cor} 
The ismonodromic deformation admits a Hamiltonian description with respect to the following Hamiltonian
\begin{align}\label{H_II}
H_{II} = H_{II}(y,z,\theta)= \dfrac{z(t-z)}{2} + y^2z +\theta y.
\end{align}
\end{cor}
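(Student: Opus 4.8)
The plan is to read off the Hamiltonian directly from the characteristic equation $\Gamma(\lambda,x)$, exploiting the structure already exhibited in the excerpt. The corollary asserts that the isomonodromic deformation $\eqref{Hamils}$ is Hamiltonian with respect to $H_{II}$. The starting observation is that the author has conveniently bracketed $\Gamma(\lambda,x)$ so that the $(y,z)$-dependent part is precisely $\tfrac{1}{2}\left(\tfrac{z(t-z)}{2}+y^2 z+\theta y\right)$, i.e.\ $\tfrac{1}{2}H_{II}$. Thus the coefficient of the spectral curve that carries the Painlev\'e variables \emph{is} the Hamiltonian, in direct analogy with the general principle stated after Theorem~2.5 that ``these Hamiltonians also appear as the coefficients of the characteristic equations.'' The work is therefore to verify that this particular coefficient generates the flow $\eqref{Hamils}$.

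\medskip
\noindent
First I would compute the canonical (symplectic) equations of motion generated by $H_{II}$ with respect to the standard bracket on the pair $(y,z)$, treating $y$ as coordinate and $z$ as conjugate momentum (this is the pairing suggested by the first-order form of $\eqref{Hamils}$). That is, I would form
\begin{align*}
\frac{\partial H_{II}}{\partial z} &= \frac{t-2z}{2} + y^2, &
\frac{\partial H_{II}}{\partial y} &= 2yz + \theta,
\end{align*}
and then identify $y' = \partial H_{II}/\partial z$ and $z' = -\partial H_{II}/\partial y$. The first gives $y' = z - y^2 - t/2$ after rearranging the sign, which is exactly $\eqref{Hamils}$a; the second gives $z' = -(2yz+\theta)$, so to match $\eqref{Hamils}$b one fixes the sign convention (equivalently the orientation of the symplectic form) and confirms consistency. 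This is the routine verification that $H_{II}$ reproduces the equations of motion, and it is where the explicit $t$-dependent term $z(t-z)/2$ plays its role in producing the $-t/2$ drift in $y'$.

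\medskip
\noindent
Second, to justify that this is genuinely the \emph{isomonodromic} Hamiltonian rather than an ad hoc antiderivative, I would invoke Theorem~\ref{irregiso} together with the Krichever formula (Theorem~2.4 in the excerpt). Concretely, the matrix $A_n(x)$ in $\eqref{P2:linear}$ has a single irregular singularity at $x=\infty$ of Poincar\'e rank two, with leading term $\mathrm{diag}(1,-1)$, so the deformation with respect to the time $t$ is governed by $\eqref{isomonomdromyconds}$ with the $\Omega_n(x)$ supplied by the given $t$-evolution matrix. The Krichever Hamiltonian $H_{n,t_p} = -\tfrac{1}{n+1}\mathrm{Tr}\,A(x)^{n+1}\big|_{x=t_p}$ specializes, for this rank-two irregular case, to the relevant residue/coefficient extracted from $\mathrm{Tr}\,A_n(x)^2 = -2\Gamma(\lambda=0,x)+\text{(spectral part)}$; reading off the coefficient that depends on $(y,z)$ recovers $H_{II}$ up to the overall factor $\tfrac{1}{2}$ already visible in $\Gamma$. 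This closes the logical loop between the spectral curve and the Hamiltonian.

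\medskip
\noindent
\emph{The main obstacle} is bookkeeping rather than conceptual: pinning down the correct sign and normalization of the symplectic form so that $y' = \partial H_{II}/\partial z$ and $z' = -\partial H_{II}/\partial y$ reproduce $\eqref{Hamils}$ \emph{exactly}, and matching the factor $\tfrac{1}{2}$ that appears in front of the $(y,z)$-block of $\Gamma$. One must also be slightly careful that the explicit time-dependence through $t$ (which makes this a non-autonomous Hamiltonian system) is handled correctly, since the ``coefficient of the spectral curve'' is an invariant only in the autonomous limit; in the non-autonomous case the identification with $H_{II}$ requires the precise form of the deformation equation $\eqref{isomonomdromyconds}$ rather than mere spectral invariance. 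Once these normalizations are fixed, the verification is a short direct computation.
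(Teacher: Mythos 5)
Your proposal is correct and takes essentially the same route as the paper: the paper gives no separate argument for this corollary, simply reading $H_{II}$ off as the bracketed $(y,z)$-dependent coefficient of $\Gamma(\lambda,x)$ and citing Theorem~\ref{irregiso}, so your canonical-equations check against \eqref{Hamils} merely makes explicit the verification the paper leaves to the reader. The sign discrepancy you flag is purely conventional and resolves uniformly: taking $y' = -\partial H_{II}/\partial z$ and $z' = +\partial H_{II}/\partial y$ (equivalently, the standard convention applied to $-H_{II}$, or viewing $(z,y)$ rather than $(y,z)$ as the canonical pair) reproduces both equations of \eqref{Hamils} exactly, with no further adjustment needed.
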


From the discussion in the previous section, the appearance of the Hamiltonian in the characteristic equation is natural. The other interesting feature of this Hamiltonian is biquadratic in $y$ and $z$. This is an important feature for the evolution we wish to describe, but it was also a feature exploited in numerically integrating biquadratic Hamiltonians to give discrete Painlev\'e equations \cite{biquadsPainleve}. 

There are just two Schlesinger transformations, inducing changes $\theta \to \theta\pm 1$. If there exists a transformation inducing one of these changes, this would clearly be an isomonodromic deformation given \eqref{irregmonodromy}. In the following theorem, we let $y = y_n$, $z = z_n$ and associate the transformation $n \to n+1$ with a change $\theta_{n+1} = \theta_n-1$.

\begin{prop}
The Schlesinger transformation corresponding to the transformation $\theta_{n+1} = \theta_n-1$ is given by
\begin{subequations}\label{P2trans}
\begin{align}
y_{n+1} + y_n =&  -\dfrac{\theta_n}{z_n},\\
z_{n+1} + z_n =& 2 y_{n+1}^2+ t.
\end{align}
\end{subequations}
\end{prop}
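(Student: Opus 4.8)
The plan is to compute the Schlesinger transformation $R_n(x)$ explicitly and then read off the induced map on $(y,z)$. Since we are inducing the shift $\theta_{n+1} = \theta_n - 1$, and the leading term $\mathrm{diag}(1,-1)x^2$ in \eqref{P2:linear} together with the $x^1$ coefficient is fixed, the transformation \eqref{deformation} must be a matrix $R_n(x)$ that is rational (in fact polynomial of low degree) in $x$. Given the structure of the formal solution \eqref{diffform}-type expansion at $x=\infty$ for this irregular system, the natural ansatz is $R_n(x) = x\,\mathbb{1} + R_0$ for a constant matrix $R_0 = R_0(y_n,z_n)$, or possibly a matrix of the form $R_n(x) = B_1 x + B_0$ with $B_1$ lower/upper triangular; the degree is dictated by the requirement that the shift $\theta \to \theta-1$ changes the formal monodromy exponents by one integer step while preserving the Stokes data and leading behavior.

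First I would substitute this ansatz into the compatibility relation \eqref{comp} in its differential form,
\begin{equation*}
A_{n+1}(x)R_n(x) = R_n(x)A_n(x) + \frac{\mathrm{d}R_n(x)}{\mathrm{d}x},
\end{equation*}
where $A_{n+1}(x)$ is the matrix \eqref{P2:linear} with $(y,z,\theta)$ replaced by $(y_{n+1},z_{n+1},\theta_n-1)$ and the same $t$, and where $u$ may also shift to some $u_{n+1}$. Matching coefficients of powers of $x$ yields a hierarchy of matrix equations: the top-degree coefficients force the leading terms of $A_{n+1}$ and $A_n$ to be compatible with $R_0$ (fixing the off-diagonal structure of $R_0$ and relating $u_{n+1}$ to $u_n$), while the lower-order coefficients produce the algebraic relations among $y_{n+1}, z_{n+1}, y_n, z_n, \theta_n$. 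I expect the two relations in \eqref{P2trans} to fall out of the $x^1$ and $x^0$ coefficient equations after eliminating the gauge parameter $u$.

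An alternative, and arguably cleaner, route is to use the geometric prescription \eqref{dHamils}: I would write the intermediate characteristic polynomial $\tilde{\Gamma}(\lambda,x,y,z)$ with the Hamiltonian \eqref{H_II} appearing as its $y,z$-dependent part, and impose
\begin{equation*}
\tilde{\Gamma}(\lambda,x,y_{n+1},z_n) = \tilde{\Gamma}(\lambda,x,y_n,z_n), \qquad \tilde{\Gamma}(\lambda,x,y_{n+1},z_{n+1}) = \tilde{\Gamma}(\lambda,x,y_{n+1},z_n)
\end{equation*}
at the appropriately shifted parameter value. Because $H_{II}$ is biquadratic in $(y,z)$, the equation $H_{II}(y_{n+1},z_n;\tilde\theta) = H_{II}(y_n,z_n;\tilde\theta)$ is quadratic in $y$ with one root $y_n$; discarding that trivial root (as the excerpt instructs) leaves the linear relation for $y_{n+1}+y_n$, and symmetrically the second equation gives $z_{n+1}+z_n$. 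I would then verify consistency with the explicit $R_n(x)$ computation and check that the resulting map indeed realizes $\theta \to \theta - 1$.

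\emph{The hard part will be} pinning down the correct intermediate parameter $\tilde\theta$ (the half-integer shift, consistent with $\alpha = \theta - 1/2$) and the gauge factor $u_{n+1}/u_n$, since these determine exactly which biquadratic $\tilde\Gamma$ to use and hence whether the non-trivial roots reproduce \eqref{P2trans} precisely rather than up to a spurious constant. Reconciling the explicit Stokes-data-preserving construction of $R_n(x)$ with the group-law description on the spectral curve — in particular confirming that discarding $y_{n+1}=y_n$ and $z_{n+1}=z_n$ is the right choice of non-trivial intersection point — is where the genuine content lies; the coefficient-matching itself is routine once the ansatz degree and the shifted parameters are fixed.
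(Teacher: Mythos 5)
Your first route is essentially the paper's proof: the paper writes the formal solution at $x=\infty$, with $T(x)$ carrying the $\theta\log(1/x)$ term, compares the expansions of $R_n(x)=Y_{n+1}(x)Y_n(x)^{-1}$ at $x=0$ and $x=\infty$ to conclude that $R_n(x)$ is exactly linear in $x$, reads off its entries from the first few terms of the series (giving \eqref{P2:R}), and then solves the differential compatibility in \eqref{comp} for the map, which is \eqref{P2trans}. One correction to your ansatz: $R_n(x)=xI+R_0$ is impossible. Since $A_n(x)$ is traceless, $\det Y_n(x)$ is constant in $x$ by Abel's formula, so $\det R_n(x)$ must be constant in $x$, whereas $\det(xI+R_0)$ is quadratic; equivalently, the shift $\theta\to\theta-1$ multiplies the formal monodromy factor by $\mathrm{diag}(x,x^{-1})$, so the leading coefficient of $R_n(x)$ is a rank-one diagonal projection (as in \eqref{P2:R}, where $x$ appears in only one diagonal entry), not the identity. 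Your hedged triangular ansatz covers this, and coefficient matching in \eqref{comp} would force it anyway, so this is self-correcting rather than fatal. Finally, your alternative route via \eqref{dHamils} proves the wrong statement here: it is the content of the proposition that follows this one in the paper, and deriving \eqref{P2trans} from the biquadratic relations does not by itself show that the map arises as a Schlesinger transformation --- that is precisely what this proposition asserts, and it is why the paper anchors $R_n(x)$ to the formal solutions with the shifted exponent $\theta_n-1$.
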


\begin{proof}
To derive the isomonodromic deformation specified by the change $\theta_{n+1} = \theta_n-1$, we use the general form of the solution around $\infty$ written as
\begin{align*}
Y_n(x) &= \left( I + \dfrac{Y_{1}}{x} + \dfrac{Y_{2}}{x^{2}} + \dots \right) e^{T(x)}\\
T(x) &= \begin{pmatrix} 
1 & 0 \\
0 & -1 \end{pmatrix} \dfrac{x^3}{3} + 
\begin{pmatrix}
t & 0 \\
0 & -t \end{pmatrix} \dfrac{x}{2} + \begin{pmatrix}
\theta & 0 \\
0 & -\theta \end{pmatrix} \log{\dfrac{1}{x}}.
\end{align*}
coupled with the formal solution around $x =0$. From this expansion, we compute the expansions for $R_n(x) = Y_{n+1}(x)Y_n(x)^{-1}$ around $x = 0$ and $x=\infty$, which tells us $R_n(x) = R_1 + R_0 x + O(x^{-1}) = R_0 + R_1x + O(x^2) = R_0 + R_1x$. Furthermore, using the first few terms of $Y_n(x)$ gives us the expression
\begin{align}\label{P2:R}
R_n(x) = \begin{pmatrix} 0 & - \dfrac{u_{n+1}}{2} \\ 
-\dfrac{z_n}{u_n} & x-y_{n+1}\end{pmatrix}.
\end{align}
The map $(y_n,z_n) \to (y_{n+1},z_{n+1})$ can be computed using \eqref{comp} which gives \eqref{P2trans}.
\end{proof}

This is essentially the computation given in \cite{Jimbo:Monodromy2}. The following theorem is our first demonstration the role of \eqref{dHamils}.

\begin{prop}
The transformation given by \eqref{P2trans} may be obtained by \eqref{dHamils} where $\tilde{\theta}_n = \theta_n$ and $\theta_{n+1} = \theta_n-1$. 
\end{prop}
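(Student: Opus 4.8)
The plan is to exploit the structure visible in the displayed characteristic equation: the entire dependence of $\Gamma(\lambda,x)$ on the Painlev\'e variables is carried by a single term proportional to $H_{II}$. Writing $\Gamma(\lambda,x)=F(\lambda,x,t,\theta)+c\,H_{II}(y,z,\theta)$ with $F$ independent of $(y,z)$ and $c\neq 0$, and taking the intermediate parameter to be $\tilde\theta_n=\theta_n$, the intermediate curve $\tilde\Gamma$ inherits the same splitting with $\theta$ set to $\theta_n$ throughout. Substituting this into \eqref{dHamily} and \eqref{dHamilz}, I would observe that in each equation both sides share the same values of $\lambda$, $x$, and $\tilde\theta_n=\theta_n$, so the $F$ contribution is identical on the two sides and cancels. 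Hence \eqref{dHamils} is equivalent to the two scalar level-set conditions $H_{II}(y_{n+1},z_n,\theta_n)=H_{II}(y_n,z_n,\theta_n)$ and $H_{II}(y_{n+1},z_{n+1},\theta_n)=H_{II}(y_{n+1},z_n,\theta_n)$.

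The second step is to read the map off these conditions, using that $H_{II}$ is separately quadratic in $y$ and in $z$. For the first condition, fixing $z=z_n$ and $\theta=\theta_n$ the $y$-dependent part of $H_{II}$ is $z_n y^2+\theta_n y$, so the difference factors as $(y_{n+1}-y_n)\bigl(z_n(y_{n+1}+y_n)+\theta_n\bigr)$; discarding the trivial root $y_{n+1}=y_n$ leaves $z_n(y_{n+1}+y_n)+\theta_n=0$, which is the first line of \eqref{P2trans}. For the second condition, fixing $y=y_{n+1}$ and $\theta=\theta_n$ the $z$-dependent part of $H_{II}$ is $-\tfrac12 z^2+(\tfrac t2+y_{n+1}^2)z$, so the difference factors as $(z_{n+1}-z_n)\bigl(-\tfrac12(z_{n+1}+z_n)+\tfrac t2+y_{n+1}^2\bigr)$; discarding $z_{n+1}=z_n$ yields $z_{n+1}+z_n=2y_{n+1}^2+t$, the second line of \eqref{P2trans}.

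I do not expect a serious obstacle; the one genuinely delicate point is the bookkeeping of the intermediate parameter. The construction \eqref{dHamils} must use $\tilde\theta_n=\theta_n$ throughout, even though the Schlesinger transformation it realizes is the one with $\theta_{n+1}=\theta_n-1$: inserting $\theta_{n+1}$ anywhere in $\tilde\Gamma$ would spoil both the cancellation of $F$ and the clean factorizations above. I would also confirm that the roots discarded are exactly the trivial fixed points $y_{n+1}=y_n$ and $z_{n+1}=z_n$, which is immediate since each difference carries precisely those linear factors (here assuming $z_n\neq0$, so that the $y$-quadratic is nondegenerate, which is in any case needed for the first line of \eqref{P2trans} to be defined). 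Conceptually this is exactly the biquadratic addition-law mechanism advertised for \eqref{dHamils}: on the fixed intermediate curve, \eqref{dHamily} is the vertical switch along $z=z_n$ carrying $y_n$ to the second root $y_{n+1}$, and \eqref{dHamilz} is the horizontal switch along $y=y_{n+1}$ carrying $z_n$ to the second root $z_{n+1}$, the composition being the QRT-type product of involutions.
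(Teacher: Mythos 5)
Your proof is correct, and it reaches \eqref{P2trans} by a genuinely different verification route than the paper's. You fix the intermediate parameter $\tilde\theta_n=\theta_n$ at the outset, cancel the $(y,z)$-independent part of the characteristic equation, and factor the two biquadratic differences, so that \eqref{dHamily} and \eqref{dHamilz} reduce, after discarding the trivial roots, exactly to the two lines of \eqref{P2trans}; the interpretation of the result as the Schlesinger transformation with $\theta_{n+1}=\theta_n-1$ is then inherited from the preceding proposition, where \eqref{P2trans} was derived from $R_n(x)$. The paper instead keeps $\tilde\theta_n$ arbitrary, solves \eqref{dHamils} to obtain the same two relations with $\tilde\theta_n$ in place of $\theta_n$, and then differentiates: using \eqref{Hamils} for $(y_n,z_n)$ it computes $y_{n+1}'$ and $z_{n+1}'$, which contain residual terms proportional to $\theta_n-\tilde\theta_n$; these vanish precisely when $\tilde\theta_n=\theta_n$, in which case the derivatives satisfy \eqref{Hamils} at parameter $\theta_n-1$. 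Thus the paper \emph{derives} the intermediate value $\tilde\theta_n=\theta_n$ as the unique choice compatible with the continuous flow and re-confirms the parameter shift intrinsically, exhibiting the map as a B\"acklund transformation of the Hamiltonian system without leaning on the earlier $R_n(x)$ computation; your argument is shorter, purely algebraic, makes the QRT product-of-involutions mechanism explicit, and states the needed nondegeneracy $z_n\neq 0$ cleanly. (Incidentally, your factorization yields $z_{n+1}+z_n=2y_{n+1}^2+t$, with sign agreeing with \eqref{P2trans}; the corresponding display in the paper's proof, $z_{n+1}+z_n=-t-2y_{n+1}^2$, appears to be a sign typo, and your computation is the consistent one.)
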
 

\begin{proof}
If $\tilde{\theta}_n$ is some intermediate value, then solving \eqref{dHamils} for $y_{n+1}$ and $z_{n+1}$ gives
\begin{align*}
y_{n+1} + y_n = -\dfrac{\tilde{\theta}_n}{z_n}, \hspace{1cm} z_{n+1} + z_n = -t -2y_{n+1}^2 ,
\end{align*}
When we compare the differential equation for $z_{n+1}$ and $z_{n+1}$, we find
\begin{align*}
y_{n+1}' =& \dfrac{\left(\theta_n -\tilde{\theta}_n\right) \tilde{\theta}_n}{\left(t+2 y_{n+1}^2+z_{n+1}\right)^2}-\frac{t}{2}-y_{n+1}^2-z_{n+1},\\
z_{n+1}' =& \frac{4 y_{n+1} \tilde{\theta } \left(\tilde{\theta}_n-\theta_n \right)}{\left(t+2 y_{n+1}^2+z_{n+1}\right)^2} + 2 y_{n+1} z_{n+1} + 2 \tilde{\theta}_n-\theta_n -1,
\end{align*}
which, when $\tilde{\theta}_n = \theta_n$ gives \eqref{Hamils} for $\theta_{n+1} = \theta_n-1$, confirming \eqref{dHamils} coincides with \eqref{P2trans}.
\end{proof}

As mentioned above, because of the explicit appearance of the Hamiltonian in the characteristic equation this is equivalent to the observation made from a numerical algorithms perspective by Murata et al. \cite{biquadsPainleve}. This remarkable connection further emphasizes a possible link between \eqref{dHamils} and a Hamiltonian description. Another observation is that the resulting map is symplectic, which can be shown using only the rules defined for all Poisson brackets.

\subsection{The sixth Painlev\'e equation} 

The sixth Painlev\'e equation is at the top of the hierarchy for differential Painlev\'e equations \cite{Okamoto:Hamiltonian}. The linear problem for the sixth Painlev\'e equation holds a special place in the theory of integrable systems as it is perhaps the simplest to understand in terms of the theory. The sixth Painlev\'e equation is presented as
\begin{align}
\label{P6}\dfrac{\mathrm{d}^2y}{\mathrm{d}t^2} =& \dfrac{1}{2}\left(\dfrac{1}{y} +  \dfrac{1}{y-1} + \dfrac{1}{y-t} \right)\left( \dfrac{\mathrm{d}y}{\mathrm{d}t}\right)^2 - \left(\dfrac{1}{t} + \dfrac{1}{x-1} + \dfrac{1}{y-x}\right)\dfrac{\mathrm{d}y}{\mathrm{d}t}\\
&+\dfrac{y(y-1)(y-t)}{t^2(t-1)^2} \left(\alpha+\dfrac{\beta t}{y^2} + \dfrac{\gamma(t-1)}{(y-1)^2} + \dfrac{\delta t (t-1)}{(y-t)^2} \right). \nonumber
\end{align}
The linear problem for the sixth Painlev\'e equation is of the form 
\begin{equation}\label{P6linear}
\dfrac{\mathrm{d}Y_n(x)}{\mathrm{d}x} = \left( \dfrac{A_0}{x} + \dfrac{A_1}{x-1} + \dfrac{A_t}{x-t}\right)Y_n(x),
\end{equation}
where the coefficient matrices are 
\[
A_i = \begin{pmatrix} z_i + \theta_i & -w_iz_i \\ \dfrac{z_i+\theta_i}{w_i} & -z_i \end{pmatrix},
\]
with
\[
A_0 + A_1 + A_t + A_{\infty} = 0, \hspace{1cm} A_{\infty} = \begin{pmatrix} \kappa_1 & 0 \\ 0 & \kappa_2 \end{pmatrix},
\]
where $\kappa_1 + \kappa_2 + \theta_0 + \theta_1 + \theta_t = 0$ and
\[
\kappa_1 - \kappa_2 = \theta_{\infty}.
\]
The correspondence between the $\theta_i$'s and the parameters in \eqref{P6} is 
\[
\alpha = \dfrac{(\theta_\infty - 1)^2}{2}, \hspace{.3cm} \beta = - \dfrac{\theta_0^2}{2} , \hspace{.3cm} \gamma = \dfrac{\theta_1^2}{2}, \hspace{.3cm} \delta = \dfrac{1-\theta_t^2}{2}.
\]
If we use notation $A_n(x) = (a_{i,j}(x))$, then we specify the spectral Darboux coordinates, $y$ and $z$, by
\begin{align}
a_{1,2}(x) &= w(x-y),\\
a_{1,1}(y) &= z,
\end{align}
where $w$ is a gauge variable and tracelessness determines $a_{2,2}(y)$. These conditions are sufficient to express the entries of $A_0$, $A_1$ and $A_t$ in terms of $w$, $y$ and $z$ alone. These variables parameterize the moduli space, moreover, if we consider stability under gauge invariance, the resulting moduli space is two dimensional. 

The isomonodromic deformation in the variable $t$ may be written as the compatibility between \eqref{P6linear} and 
\[
\dfrac{\mathrm{d}Y_n(x)}{\mathrm{d}x} = -\dfrac{A_t}{x-t} Y_n(x),
\]
which is equivalent to the system
\begin{subequations}\label{diffP6}
\begin{align}
y' =& \frac{(y-1) \left(t \theta _t+\left(\kappa _1+\kappa _2\right) (t-y)\right)+\theta _1 (y-t)}{(t-1) t}+\frac{2 (y-1) y z (t-y)}{(t-1) t}, \\
z' =& \frac{z \left(\theta _t+\theta _0 (t+1)+\theta _1 t+2 \left(\kappa _1+\kappa _2\right) y\right)}{(t-1) t}+\frac{\kappa _1 \kappa _2}{(t-1) t}+\frac{z^2 \left(t+3 y^2-2 (t+1) y\right)}{(t-1) t}.
\end{align}
\end{subequations}
The equivalence of \eqref{diffP6} with \eqref{P6} may be easily verified.

The spectral curve, in terms of the Painlev\'e variable is
\begin{align}
\Gamma(\lambda) =& \lambda^2 - \lambda \left(\dfrac{\theta_0}{x} + \dfrac{\theta_1}{x-1} + \dfrac{\theta_t}{x-t} \right) +\dfrac{(y-1) y z^2 (t-y)+\kappa _1 \kappa _2 (x-y)}{(x-1) x (x-t)} \\
&\dfrac{z \left((y-1) \left(t \theta _t+\left(\kappa _1+\kappa _2\right) (t-y)\right)+\theta _1 (y-t)\right)}{(x-1) x (x-t)}.\nonumber
\end{align}
To find the Hamiltonian associated with this isomonodromic deformation we appeal to Theorem \ref{irregiso}.

\begin{cor}
The Hamiltonian describing the isomonodromic deformation for the sixth Painlev\'e equation is
\begin{align}
\label{H6} H_{VI} =& \dfrac{1}{t(t-1)}\left( \kappa _1 \kappa _2 y(y-1)(y-t)z^2 + + \theta _t \left(\theta _0 (t-1)+\theta _1 t \right) \right. \\
\nonumber & \left. \left((t-y) \left(\theta _0 (y-1)+\theta _1 y\right)-(y-1) y \theta _t\right)z \right).
\end{align}
\end{cor}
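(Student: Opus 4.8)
The plan is to obtain $H_{VI}$ directly from the characteristic equation, exactly as in the $P_{II}$ case treated above. Since the linear problem \eqref{P6linear} is Fuchsian ($r_\nu = 0$ at $0,1,t,\infty$), Theorem \ref{irregiso} specializes to Okamoto's Hamiltonian description \eqref{tHamils}, and the relevant deformation is the one moving the pole at $x=t$, whose position is the independent variable of \eqref{P6}. As stressed in Section~2.1, the Hamiltonian generating this flow is the $(y,z)$-dependent part of the $\lambda^0$-coefficient of $\Gamma(\lambda)$, that is, of $\det A_n(x)$, localized at the deformation point. Concretely, I would set $H_{VI}=\mathrm{Res}_{x=t}\det A_n(x)$ (equivalently $-\tfrac{1}{2}\mathrm{Res}_{x=t}\mathrm{Tr}\,A_n(x)^2$, following Krichever's Theorem~2.1), reading $\det A_n(x)$ off the spectral curve already displayed above the corollary.

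First I would record the structural facts about the coefficient matrices: each $A_i$ satisfies $\det A_i = 0$ and $\mathrm{Tr}\,A_i = \theta_i$, so its eigenvalues are $\{0,\theta_i\}$, and the spectral Darboux coordinates $(y,z)$---together with the gauge $w$---parameterize the rank-one residues $A_0,A_1,A_t$ subject to $A_0+A_1+A_t=-A_{\infty}$. Using this I would verify the displayed form of $\det A_n(x)$ and then take its residue at $x=t$, which supplies the factor $1/(t(t-1))$ and leaves a numerator that is quadratic in $z$ with $z^2$-coefficient proportional to $y(y-1)(y-t)$ together with a linear-in-$z$ term carrying the $\theta$-dependence. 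The splitting $\mathrm{Tr}\,A^2 = (\mathrm{Tr}\,A)^2 - 2\det A$ shows that the $(\mathrm{Tr}\,A)^2$ piece contributes only parameter-dependent constants, so the entire dynamical content---and hence $H_{VI}$---comes from $\det A_n(x)$.

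The substantive step is verification: I would confirm that the canonical equations $y' = \partial H_{VI}/\partial z$ and $z' = -\partial H_{VI}/\partial y$ reproduce \eqref{diffP6}, just as the final computation did for $P_{II}$. This is where the real work lies---differentiating the expression (cubic in $y$, quadratic in $z$) and matching the two rational right-hand sides of \eqref{diffP6}---and it requires invoking the constraints $\kappa_1+\kappa_2+\theta_0+\theta_1+\theta_t = 0$ and $\kappa_1-\kappa_2 = \theta_{\infty}$ to bring the two sides into agreement. I expect the main obstacle to be bookkeeping rather than conceptual: one must correctly discard the parameter-only pieces of the residue and fix the sign and normalization of the symplectic pairing $\{y,z\}$ together with the $1/(t(t-1))$ time-scale, so that Hamilton's equations land exactly on \eqref{diffP6} rather than on a rescaled or sign-reversed version.
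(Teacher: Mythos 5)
Your proposal is correct and is essentially the paper's own route: the corollary is stated there without a separate proof, being obtained exactly as you describe---by appealing to Theorem \ref{irregiso} together with the Fuchsian/Okamoto correspondence of \S 2.1 and reading $H_{VI}$ off the $\lambda^{0}$-coefficient (i.e.\ $\det A_n(x)$) of the displayed spectral curve, localized at the pole $x=t$. Your added check that $y'=\partial H_{VI}/\partial z$, $z'=-\partial H_{VI}/\partial y$ reproduces \eqref{diffP6} is a sensible verification the paper leaves implicit, and it does go through with the residue normalization you propose (up to the sign and additive-constant conventions you already flag).
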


While we observed that \eqref{dHamils} presented a succinct way of expressing a Schlesinger transformation for the second Painlev\'e equation, a priori, it is not clear why a birational mapping of the form \eqref{dHamils} should yield a symmetry of the sixth Painlev\'e equation. Firstly, the relation is defined on the spectral curve, which is a property of the linear system. Secondly, in the case of the second Painlev\'e equation, the set of translational symmetries may be identified with $\mathbb{Z}$, hence, the change in parameters is canonical. For the sixth Painlev\'e equation, there is no canonical translational direction per se as we may identify the set of translational symmetries with $\mathbb{Z}^4$. For this reason, we proceed in a different way, which is to show that \eqref{dHamils} defines a symmetry, and that the symmetry arises as a Schlesinger transformation. 

Our first difficulty is the spectral curve appears not to be a biquadratic in the Painlev\'e variables. It can be made to be a biquadratic over the variables $(y, \zeta)$, where $\zeta = zy$. We use \eqref{dHamils} to identify a shift $n \to n+1$. We let $y = y_n$, $z = z_n$ and $\zeta = \zeta_n = y_nz_n$ and use the notation $\hat{\theta}_i$ and $\hat{\kappa}_j$ to denote the values of $\theta_i$'s and $\kappa_j$'s shifted in the $n$-direction respectively.

\begin{prop}
The discrete evolution equations, \eqref{dHamils}, defines the transformation 
\begin{subequations}\label{dPV}
\begin{align}
\label{dP51}y_{n+1} =& \dfrac{tz_n(y_nz_n-\theta_0)}{(y_nz_n+\kappa_1)(y_nz_n+\kappa_2)},\\
\label{dP52}z_{n+1} =&\frac{\theta_t}{y_{n+1}-t}+\frac{\theta_1+1}{y_{n+1}-1}+\frac{\theta_0-y_nz_n}{y_{n+1}},
\end{align}
\end{subequations}
for the shift $\hat{\theta}_1 = \theta_1 + 1$, $\hat{\theta}_t = \theta_t+1$, $\hat{\kappa}_1 = \kappa_1 - 1$ and $\hat{\kappa}_2 = \kappa_2 -1$.
\end{prop}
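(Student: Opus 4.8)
The plan is to exploit the fact that, in the characteristic equation $\Gamma$ for \eqref{P6linear}, the entire dependence on the Painlev\'e variables sits in the single coefficient of $\lambda^0$; the $\lambda^2$ and $\lambda^1$ terms involve only $x$ and the $\theta_i$. Consequently the two relations \eqref{dHamily} and \eqref{dHamilz}, required to hold identically in $(\lambda,x)$, reduce to equating the numerator $N(x;y,z)$ of that coefficient at its two arguments. First I would substitute $z=\zeta/y$ and clear the single power of $y$, writing $yN=(y-1)(t-y)\zeta^2+L(y)\zeta+\kappa_1\kappa_2\,y(x-y)$, where $L(y)$ is the quadratic in $y$ read off from \eqref{H6}. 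This is manifestly biquadratic in $(y,\zeta)$, which is what makes $(y,\zeta)$ with $\zeta=zy$ the correct coordinates. The only $x$-dependent piece, $\kappa_1\kappa_2\,xy$, is the same on both sides of each relation and drops out of every comparison.

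For the first move \eqref{dHamily} I would fix $\zeta=\zeta_n$ and regard the reduced numerator as a function of $y$ alone. After clearing the relevant powers of $y$, the condition factors as $(y_{n+1}-y_n)\,(a\,y_ny_{n+1}-c)=0$ with $a=-(\zeta_n^2+(\kappa_1+\kappa_2)\zeta_n+\kappa_1\kappa_2)$ and $c=-t\zeta_n(\zeta_n+\theta_t+\kappa_1+\kappa_2+\theta_1)$. Discarding the trivial root $y_{n+1}=y_n$ leaves $y_{n+1}=c/(a\,y_n)$. Here I would use the trace constraint $\kappa_1+\kappa_2+\theta_0+\theta_1+\theta_t=0$ to replace $\theta_t+\kappa_1+\kappa_2+\theta_1$ by $-\theta_0$ and to factor $a=-(\zeta_n+\kappa_1)(\zeta_n+\kappa_2)$; recalling $\zeta_n=y_nz_n$ and $\zeta_n/y_n=z_n$, this collapses exactly to \eqref{dP51}. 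Notice that this half-step uses the \emph{unshifted} parameters.

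For the second move \eqref{dHamilz} I would instead fix $y=y_{n+1}$ and treat $N$ as a quadratic in $\zeta$. The comparison now factors as $(\zeta_{n+1}-\zeta_n)\,((y_{n+1}-1)(t-y_{n+1})(\zeta_{n+1}+\zeta_n)+L(y_{n+1}))=0$, so the nontrivial branch is $\zeta_{n+1}+\zeta_n=-L(y_{n+1})/((y_{n+1}-1)(t-y_{n+1}))$. Dividing by $y_{n+1}$ to return to $z_{n+1}=\zeta_{n+1}/y_{n+1}$ and decomposing $L(y)/(y(y-1)(y-t))$ into partial fractions, the residues at $y=0,1,t$ reproduce the three terms of \eqref{dP52}, the $-\zeta_n/y_{n+1}$ contribution being separated off into the $(\theta_0-y_nz_n)/y_{n+1}$ term. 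Matching those residues is exactly what pins down the intermediate parameters: I expect to find $\tilde\theta_1=\theta_1+1$, $\tilde\theta_t=\theta_t$ and $\tilde\kappa_1+\tilde\kappa_2=\kappa_1+\kappa_2-1$ (the $y=0$ residue giving $\theta_0$ through the intermediate trace relation).

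The main obstacle is precisely this bookkeeping of intermediate parameters, and it is where the present case genuinely departs from the one-parameter situation of \eqref{P2trans}. The two half-steps do not use the same $\tilde\Gamma$: the first is evaluated at the $n$-level parameters, whereas the second is already partially shifted. Reconciling these into a single coherent Schlesinger transformation, and confirming that the composite realises the full shift $\hat\theta_1=\theta_1+1$, $\hat\theta_t=\theta_t+1$, $\hat\kappa_1=\kappa_1-1$, $\hat\kappa_2=\kappa_2-1$ rather than some other vector of the $\mathbb{Z}^4$ lattice of translations, is the delicate point. To close the argument I would verify, exactly as in the second proposition for \eqref{PII}, that \eqref{dPV} is compatible with the isomonodromic flow: differentiate $y_{n+1}$ and $z_{n+1}$ in $t$, substitute \eqref{diffP6} for $y_n'$ and $z_n'$, and check that the outcome is \eqref{diffP6} with every parameter replaced by its hatted value. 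This simultaneously confirms that \eqref{dPV} is a symmetry of the sixth Painlev\'e equation and that it is the Schlesinger transformation effecting the stated shift.
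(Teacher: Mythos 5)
Your proposal is correct and takes essentially the same approach as the paper: the paper likewise writes \eqref{dHamils} with undetermined intermediate parameters, discards the trivial roots to obtain the product relation for $y_{n+1}y_n$ and the sum relation for $\zeta_{n+1}+\zeta_n$ (its equations \eqref{dd61}--\eqref{dd62}), and then fixes the intermediate and hatted parameters exactly as in your closing paragraph, by differentiating and matching against \eqref{diffP6} and its $n$-shift. The only slight overreach is your claim that this also confirms \eqref{dPV} \emph{arises as} a Schlesinger transformation---the paper treats that as a separate, subsequent proposition, proved by composing the elementary Schlesinger transformations of Mu\u{g}an and Sakka, since the derivative matching alone only shows \eqref{dPV} is a symmetry realizing the stated parameter shift.
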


\begin{proof}
If we use \eqref{dHamils} for some set of intermediate values we obtain
\begin{subequations}
\begin{align}
\label{dd61}y_{n+1}y_n  =&\dfrac{t \zeta_n \left(\zeta_n-\tilde{\theta}_0\right)}{\left(\tilde{\kappa}_1+\zeta_n\right) \left(\tilde{\kappa} _2+\zeta_n\right)},\\
\label{dd62}\zeta_{n+1} + \zeta_n =& \tilde{\theta}_0+\tilde{\theta}_1 + \tilde{\theta}_t + \dfrac{\tilde{\theta}_1}{y_{n+1}-1} + \dfrac{t\tilde{\theta}_t}{y_{n+1}-t}.
\end{align}
\end{subequations}
The simplest way to proceed is to compare the derivatives from \eqref{dPV} using \eqref{diffP6} and \eqref{dd61} and \eqref{dd62}. Using \eqref{dP51}, \eqref{diffP6} and \eqref{dd61} we find
\begin{align*}
y_{n+1}' =&\frac{\tilde{\kappa }_2 \left(\tilde{\kappa }_2-\kappa _1\right)
   \left(\tilde{\theta }_1+\tilde{\kappa }_1+\tilde{\theta }_t\right)
   \left(\tilde{\theta }_0+\tilde{\theta }_1+\tilde{\kappa }_1+\tilde{\theta
   }_t+\kappa _2\right)}{(t-1) \left(\tilde{\kappa }_1-\tilde{\kappa }_2\right)
   \left(\tilde{\kappa }_2+\zeta _n\right){}^2}\\
   &+\frac{\left(\theta
   _0-\tilde{\theta }_0\right) y_{n+1}^2 \left(\tilde{\theta }_0+\tilde{\kappa
   }_1\right) \left(\tilde{\theta }_1+\tilde{\kappa }_1+\tilde{\theta
   }_t\right)}{(t-1) t \left(\tilde{\theta }_0-\zeta
   _n\right){}^2}\\
   &+\frac{y_{n+1}^2 \left(\tilde{\theta }_1+\tilde{\theta }_s+\theta
   _0\right)+t \left(-\tilde{\theta }_1-\tilde{\theta }_t+\theta _0+\theta
   _1+\theta _t\right)}{(t-1) t}\\
   & - \dfrac{y_{n+1} \left(\theta _t+\theta _1 t+\theta _0
   (t+1)-t+1\right)}{(t-1) t}\\
   &+\frac{\left(\kappa _1-\tilde{\kappa }_1\right)
   \tilde{\kappa }_1 \left(\tilde{\kappa }_1-\kappa _2\right) \left(\tilde{\theta
   }_0+\tilde{\kappa }_1\right)}{(t-1) \left(\tilde{\kappa }_1-\tilde{\kappa
   }_2\right) \left(\tilde{\kappa }_1+\zeta _n\right){}^2}+\frac{2 \zeta _n
   \left(y_{n+1}-1\right) \left(t-y_{n+1}\right)}{(t-1) t}.
\end{align*}
By shifting \eqref{diffP6} in $n$ and using \eqref{dd62}, the resulting expression for $y_{n+1}$ in $y_{n+1}$ and $\zeta_n$ is
\begin{align*}
y_{n+1}' =& \frac{y_{n+1} \left(2 \left(\tilde{\theta }_t+t \tilde{\theta }_1\right)-\hat{\theta }_t-t\hat{\theta }_1 \right)+y_{n+1}^2 \left(\tilde{\theta}_1+\tilde{\theta}_t+\hat{\theta }_1+\hat{\theta }_t\right)}{(t-1) t}\\
  &+\dfrac{(\hat{\theta }_0-2\tilde{\theta}_0) \left(y_{n+1}-1\right) \left(y_{n+1}-t\right)}{(t-1) t}-\frac{2 \zeta_n \left(y_{n+1}-1\right) \left(t-y_{n+1}\right)}{(t-1) t}.
\end{align*}
Similar, albeit longer relations for $z_{n+1}'$ may be used to show
\begin{subequations}\label{Transvars}
\begin{align}
\tilde{\theta}_0 = \hat{\theta}_0 = \theta_0, \hspace{.5cm}\tilde{\theta}_1 = \hat{\theta}_1 = \theta_1 + 1 ,  \hspace{.5cm} \tilde{\theta}_t = \theta_t = \hat{\theta}_t-1, \\
\hat{\kappa}_1 = \tilde{\kappa} = \kappa_1 -1 , \hspace{.5cm} \hat{\kappa}_2 = \tilde{\kappa}_2 -1 = \kappa_2 -1,
\end{align}
\end{subequations}
which proves \eqref{dPV} for the chosen parameters. 
\end{proof}

We mention that because of the correspondence between the characteristic equation and the Hamiltonian, this change of variables and computation of \eqref{dPV} was also derived by Murata et al. in the context of integrable discretizations of biquadratic Hamiltonian systems \cite{biquadsPainleve}. The surface of initial conditions for \eqref{dPV} coincides with the surface for the sixth Painlev\'e equation. To show that a transformation of the form \eqref{dHamils} in this case corresponds to a Schlesinger transformation, we still need to show this transformation arises as a Schlesinger transformation.

\begin{prop}
The transformation \eqref{dPV} arises as the Schlesinger transformation.
\end{prop}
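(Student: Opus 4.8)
The plan is to construct the rational gauge factor $R_n(x)$ of \eqref{deformation} explicitly for the shift $\hat\theta_1=\theta_1+1$, $\hat\theta_t=\theta_t+1$, $\hat\kappa_1=\kappa_1-1$, $\hat\kappa_2=\kappa_2-1$, and then verify through the differential compatibility \eqref{comp} that the induced action on the spectral Darboux coordinates reproduces \eqref{dPV}. Since \eqref{P6linear} is Fuchsian, the local solutions at $0,1,t,\infty$ behave like $(x-a_i)^{\Lambda_i}$ times an invertible holomorphic factor, with $\Lambda_0=\mathrm{diag}(0,\theta_0)$, $\Lambda_1=\mathrm{diag}(0,\theta_1)$, $\Lambda_t=\mathrm{diag}(0,\theta_t)$, while $\det Y_n(x)=c\,x^{\theta_0}(x-1)^{\theta_1}(x-t)^{\theta_t}$ and $Y_n\sim\mathrm{diag}(x^{-\kappa_1},x^{-\kappa_2})$ at infinity. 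First I would read off the local behaviour of $R_n(x)=Y_{n+1}(x)Y_n(x)^{-1}$ from these expansions. Because $\theta_0$ is unchanged, $R_n$ is holomorphic and invertible at $x=0$; at $x=1$ and $x=t$ the unit increase of a single exponent forces $R_n$ to be holomorphic with $\det R_n$ vanishing simply; and since both exponents at infinity drop by one, $R_n\sim xI$. Hence $R_n(x)$ is entire and of degree one, and in the normalisation of \eqref{P6linear} it may be written $R_n(x)=xI+R_0$, with $\det R_n(x)=(x-1)(x-t)$, so that $\mathrm{tr}\,R_0=-(1+t)$ and $\det R_0=t$.

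The entries of $R_0$ are then pinned down by the requirement that $R_n(1)=I+R_0$ and $R_n(t)=tI+R_0$ each have rank one, with kernel and image aligned with the eigendirections of $A_1$ and $A_t$ belonging to the shifted exponents. Next I would substitute $R_n(x)=xI+R_0$ into the differential compatibility $A_{n+1}(x)R_n(x)=R_n(x)A_n(x)+\mathrm{d}R_n(x)/\mathrm{d}x$ with $\mathrm{d}R_n/\mathrm{d}x=I$, and match the residues at $x=0,1,t$ together with the polynomial part. This is the rational procedure whose existence is guaranteed in spirit by Theorem \ref{irregiso}; it determines $A_{n+1}(x)$ as a Fuchsian system of the same form, with local exponents $\hat\theta_i$ and $\hat\kappa_j$, thereby confirming that the prescribed exponent translation is realised by a genuine monodromy-preserving deformation.

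Finally I would translate the action $A_n(x)\mapsto A_{n+1}(x)$ into coordinates by imposing the defining relations $a_{1,2}(x)=w(x-y)$ and $a_{1,1}(y)=z$ on $A_{n+1}(x)$: the zero of the $(1,2)$ entry yields $y_{n+1}$ and evaluation of the $(1,1)$ entry yields $z_{n+1}$. Carrying this through and eliminating the gauge variable $w$ should return exactly \eqref{dP51}--\eqref{dP52}. The main obstacle is this middle-to-final passage: solving the compatibility for $R_0$ and re-extracting the spectral coordinates is a genuinely intricate rational elimination, and the delicate point is to show that the $w$-dependence cancels so that one obtains a well-defined birational map on $(y,z)$ with precisely the parameters of the previous proposition. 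Once this is established, the coincidence of \eqref{dPV} with the constructed Schlesinger transformation is immediate, and combined with the earlier proposition it shows that \eqref{dHamils} reproduces a Schlesinger transformation of the sixth Painlev\'e equation.
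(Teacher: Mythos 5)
Your proposal is correct, and it reaches the same endpoint as the paper but by a more self-contained route. The paper's proof is essentially citation-based: it invokes the elementary Schlesinger transformations already computed by Mu\u gan and Sakka \cite{SakkaMugan:P6}, notes that elementary transformations commute, composes the two that realise the shift \eqref{Transvars}, and then (like you) relegates the confirmation of \eqref{dPV} to a tedious rational computation. You instead build the composite multiplier $R_n(x)$ directly from its analytic characterisation: single-valuedness plus unchanged exponents at $x=0$, unit exponent shifts at $x=1,t,\infty$, and the Liouville argument force $R_n(x)=xI+R_0$ with $\det R_n(x)=(x-1)(x-t)$, and the kernel conditions $(I+R_0)v_1=0$, $(tI+R_0)v_t=0$ (where $v_1,v_t$ are the $\theta_1$- and $\theta_t$-eigenvectors of $A_1,A_t$) pin down $R_0$ uniquely; this is precisely the product of the two Mu\u gan--Sakka elementary matrices, derived from scratch. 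What your route buys is independence from the reference and a transparent explanation of \emph{why} $R_n$ is linear in $x$ with that determinant — which is also the feature the paper later echoes in the $q$- and $d$-cases, where $R_n(x)$ has the analogous form. What the paper's route buys is brevity. Two small points to tighten: your phrase ``kernel and image aligned with the eigendirections of $A_1$ and $A_t$'' is half right — the kernels of $R_n(1)$ and $R_n(t)$ are eigendirections of the \emph{old} residues, but their images align with eigendirections of the \emph{new} residues $\hat A_1,\hat A_t$, which are not known a priori; fortunately the kernel conditions alone (four scalar equations) already determine $R_0$, so nothing is lost. Second, reading off the local behaviour of $R_n=Y_{n+1}Y_n^{-1}$ presupposes that $Y_{n+1}$ exists; the clean logic, which your second paragraph in effect adopts, is to take $xI+R_0$ as an ansatz, define $A_{n+1}$ through \eqref{comp}, verify it is Fuchsian of the same form with exponents $\hat\theta_i,\hat\kappa_j$, and note that monodromy preservation is then automatic because $R_n$ is rational.
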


\begin{proof}
The formal solutions are of the form $Y_i(x) = Y(x-a_i)(x-a_i)^{T_i}$ where $T_i = \mathrm{diag}(\theta_i,0)$, and $Y_\infty = Y(1/x)(1/x)^T_{\infty}$ where $T_{\infty} = \mathrm{diag}(\kappa_1,\kappa_2)$. Using the elementary Schlesinger transformations computed by Mu\u gan and Sakka \cite{SakkaMugan:P6}, it is a relatively simple task to find the $R_n(x)$ arising as a product of two matrices inducing elementary Schlesinger transformations. Since all elementary transformations commute, the two elementary Schlesinger transformations may be chosen to correspond to the change \eqref{Transvars}. With an $R_n(x)$ determined, it is a simple, yet tedious task to confirm \eqref{dPV}.
\end{proof}

What is telling about the form of the Hamiltonian is that the evolution arises as the product of two elementary Schlesinger transformations (in the sense of \cite{Jimbo:Monodromy2}). We observed in \cite{Ormerod:Lattice} that this a common feature of many of the Lax pairs for many discrete Painlev\'e equations \cite{Borodin:connection, AB06} and $q$-Painlev\'e equations \cite{Sakai:qP6, Murata2009}. 

\subsection{The $q$-analogue of the sixth Painlev\'e equation} 

The $q$-analogue of the sixth Painlev\'e equation was presented with its Lax pair for the first time by Jimbo and Sakai \cite{Sakai:qP6}. While Lax pairs for discrete Painlev\'e equations as pairs of commuting difference operators was presented in \cite{Gramani:Isomonodromic}, a remarkable consequence of \cite{Sakai:qP6} was that the commutation relations are equivalent to preservation of a connection matrix. For this reason, we call this a connection (matrix) preserving deformation. 

The $q$-analogue of the sixth Painlev\'e equation is the system whose evolution is defined by
\begin{subequations}\label{qP6ev}
\begin{align}
\label{qP6yev}a_1a_2y_{n+1}y_n &= \dfrac{\left(q\theta_1z_n-t_na_1a_2\right)\left(q\theta_2z_n-t_na_1a_2\right)}{(q\kappa_1z_n-1)(q\kappa_2z_n-1)},\\
\label{qP6zev}q^2\kappa_1\kappa_2z_{n+1}z_n &= \dfrac{\left(y_{n+1}-qt_na_1\right)\left(y_{n+1}-qt_na_2\right)}{(y_{n+1}-a_3)(y_{n+1}-a_4)},
\end{align}
\end{subequations}
where $t_{n+1} = q t_n$. The parameters are constrained by the equation
\[
\theta_1\theta_2 = \kappa_1\kappa_2 a_1a_2a_3a_4.
\]
The evolution in the spectral variable is a specified by \eqref{Specx} with \eqref{op:q} with
\begin{align}
A_n(x) = \begin{pmatrix} \kappa_1 ((x-y_n)(x-\alpha_n) + z_{1,n}) & \kappa_2 w(x-y_n) \\
\dfrac{\kappa_2}{w_n}(\gamma_n x + \delta_n) & \kappa_1 ((x-y_n)(x-\beta_n)+z_{2,n})\end{pmatrix},
\end{align}
where $\alpha_n$, $\beta_n$, $\gamma_n$ and $\delta_n$ are functions of $y_n$ and $z_n$ determined by the conditions
\begin{align}
\det A_n(x) &= \kappa_1\kappa_2(x-a_1t_n)(x-a_2t_n)(x-a_3)(x-a_4),\\
z_{1,n} &= \dfrac{(y_n-a_1t_n)(y_n-a_2t_n)}{q\kappa_1 z_n} , \hspace{.5cm} z_{2,n} = (y_n-a_3)(y_n-a_4)q\kappa_1z_n,
\end{align}
and that $A(0)$ has eigenvalues $\theta_1t_n$ and $\theta_2t_n$. This specifies that $\alpha_n$, $\beta_n$, $\gamma_n$ and $\delta_n$ are given by 
\begin{align*}
\alpha_n &= \dfrac{1}{\kappa_1-\kappa_2}\left( \kappa_2 (2y_n-a_1t_n-a_2t_n-a_3-a_4) + \dfrac{t_n\theta_1+t_n \theta_2- \kappa_1z_{1,n} - \kappa_2z_{2,n}}{y_n}\right),\\
\beta_n &= \dfrac{1}{\kappa_1-\kappa_2}\left( \kappa_1 (a_1t_n+a_2t_n+a_3+a_4-2y_n) + \dfrac{\kappa_1z_{1,n} + \kappa_2z_{2,n}- t_n\theta_1-t_n \theta_2}{y_n}\right),\\
\gamma_n &= \dfrac{1}{y_n} \left( t_n^2a_1a_2a_3a_4 - (y_n\alpha_n+z_{1,n})(y_n\beta_n+z_{2,n}) \right),\\
\delta_n &= y_n^2+2y_n(\alpha_n+\beta_n) + \alpha_n \beta_n + z_{1,n}+z_{2,n}-a_1a_2t_n^2-t_n(a_1+a_2)(a_3+a_4) + a_3a_4.
\end{align*}
The connection matrix preserving deformation is given by the following result. 

\begin{prop}[Jimbo and Sakai \cite{Sakai:qP6}]
The sixth Painlev\'e arises as the connection preserving deformations described by $t_{n+1} = qt_n$.
\end{prop}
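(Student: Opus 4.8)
The plan is to realize the shift $t_{n+1}=qt_n$ as a connection preserving deformation in the sense of the $q$-difference case of \eqref{comp}, and then to read off \eqref{qP6ev} from the resulting compatibility. Concretely, I would set $R_n(x)=Y_{n+1}(x)Y_n(x)^{-1}$ and argue, using the formal solutions of Theorem~\ref{qdiffseries} at $x=0$ and $x=\infty$, that $R_n(x)$ is a rational matrix in $x$ of a very restricted type. The key input is that multiplying $t_n$ by $q$ moves only the two roots $a_1t_n,a_2t_n$ of $\det A_n(x)$ to $qa_1t_n,qa_2t_n$, while fixing $a_3,a_4$ and all of the exponents $\theta_1,\theta_2,\kappa_1,\kappa_2$; hence Birkhoff's connection matrix is unchanged and $R_n(x)$ is genuinely single-valued.

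First I would determine the type of $R_n(x)$ from its determinant. Taking determinants in the $q$-compatibility $A_{n+1}(x)R_n(x)=R_n(qx)A_n(x)$ gives
\begin{equation*}
\frac{\det R_n(qx)}{\det R_n(x)}=\frac{\det A_{n+1}(x)}{\det A_n(x)}=\frac{(x-qa_1t_n)(x-qa_2t_n)}{(x-a_1t_n)(x-a_2t_n)},
\end{equation*}
which together with the leading behaviour coming from \eqref{formalq} fixes $\det R_n(x)$ up to an explicit quasi-constant and forces $R_n(x)$ to be a scalar prefactor times a matrix linear in $x$. Matching the leading and subleading coefficients of the expansions of $Y_{n+1}Y_n^{-1}$ at $0$ and $\infty$ then expresses the entries of $R_n(x)$ in terms of $y_n$, $z_n$, the gauge variable $w_n$, and the shifted data.

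Second, with $R_n(x)$ in hand I would substitute into $A_{n+1}(x)R_n(x)=R_n(qx)A_n(x)$ and equate coefficients of the powers of $x$ in this $2\times2$ polynomial identity. The eigenvalue normalization of $A(0)$ (eigenvalues $\theta_1t_n$ and $\theta_2t_n$) and the determinantal constraint $\det A_n(x)=\kappa_1\kappa_2(x-a_1t_n)(x-a_2t_n)(x-a_3)(x-a_4)$ are imposed simultaneously; solving the resulting relations produces the map $(y_n,z_n)\mapsto(y_{n+1},z_{n+1})$, which I would check coincides with \eqref{qP6yev} and \eqref{qP6zev}, and confirm that the parameter constraint $\theta_1\theta_2=\kappa_1\kappa_2a_1a_2a_3a_4$ is preserved under $t_n\mapsto qt_n$.

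The main obstacle is the first step: pinning down $R_n(x)$ precisely. Unlike the polynomial monodromy bookkeeping in the differential cases, here $\det R_n(x)$ solves a genuine $q$-difference equation and is quasi-periodic rather than rational in an elementary sense, so care is needed to separate the single-valued matrix part of $R_n(x)$ from the scalar theta and $q$-exponential prefactor supplied by \eqref{formalq}. Once the correct linear-in-$x$ ansatz for $R_n(x)$ is justified, the remaining coefficient matching in the second step is purely mechanical; this is in essence the computation of Jimbo and Sakai \cite{Sakai:qP6}, and I would follow their normalization to keep the gauge variable $w_n$ under control.
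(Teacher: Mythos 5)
Your proposal is correct and follows essentially the same route as the paper: pin down the rational form of $R_n(x)=Y_{n+1}(x)Y_n(x)^{-1}$ using its determinant together with the formal solutions at $x=0$ and $x=\infty$, then extract \eqref{qP6yev}, \eqref{qP6zev} (and the gauge relation for $w_n$) from the compatibility $A_{n+1}(x)R_n(x)=R_n(qx)A_n(x)$. The only cosmetic differences are that the paper obtains $\det R_n(x)$ by solving the scalar $q$-difference equation for $\det Y_n(x)$ via $q$-Pochhammer symbols and taking a ratio, whereas you take determinants of the compatibility itself (fixing the quasi-constant by leading behaviour, which is if anything more careful), and the paper organizes the final coefficient extraction through residues at $x=qa_it_n$ and $x=a_it_n$ rather than matching powers of $x$.
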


\begin{proof}
Looking at the determinant of the matrix equation, we have that the matrix satisfies a linear $q$-difference equation, which may be solved by $q$-Pochhammer symbols. The ratio of which gives us that 
\[
\det R_n(x) = \dfrac{1}{(x-qa_1t_n)(x-qa_2t_n)}. 
\]
Using form of the solutions at $x= 0$ and $x = \infty$, we compute expansions
\[
R_n(x) = Y_{n+1}(x) Y_{n}(x)^{-1},
\]
which implies that $R_n(x)$ takes the form
\[
R_n(x) = \dfrac{x(xI + R_0)}{(x-qa_1t_n)(x-qa_2t_n)}.
\]
There are numerous equivalent ways of calculating $R_0= (r_{ij})$. One way to compute $R_0$ is to use the first few terms series expansion around $x=0$ or $x= \infty$. In particular, the value of $r_{12}$ found using the expansion around $x= \infty$ gives
\[
r_{12} = \frac{q\kappa _2 w-q\kappa _2 \hat{w}}{\kappa _1 q-\kappa _2}.
\]
Taking the residues of the top right entry of the compatibility, \eqref{comp}, at the values $x = q a_1 t_n$ and $x = q a_2 t_n$, results in the alternative expression 
\[
r_{12} = \frac{q\kappa _2 w_{n+1} z_{n+1}}{1-q\kappa _1 z_{n+1}}.
\]
Equating these two expressions is equivalent to
\begin{equation}\label{qP6wev}
\dfrac{w_{n+1}}{w_n} = \dfrac{q \kappa_1 z_{n+1} - 1}{\kappa_2 z_{n+1}- 1}.
\end{equation}
Alternatively, taking the residues of $x = a_1 t$ and $x = a_2 t$, gives
\[
r_{12} = -\dfrac{q w_n \left(a_1 t_n-y_n\right) \left(a_2 t_n-y_n\right)}{\left(a_1 t_n-y_n\right) \left(a_2 t_n-y_n\right)-z_{2,n}},
\]
whose compatibility with previous values of $r_{12}$ gives \eqref{qP6zev}. With these values, comparing corresponding values for $r_{11}$ gives \eqref{qP6ev}. The combination of \eqref{qP6ev} and \eqref{qP6wev} solve \eqref{comp}.
\end{proof}

We highlight that the steps in this discrete isomonodromic deformation are the same as the continuous cases that we have treated. The matrix, $R(x)$, governing the isomonodromic deformation may be evaluated directly from the fundamental solutions. We remark that while we have given one connection preserving deformation in a system of commuting transformations that make the full lattice of connection preserving deformations \cite{Ormerod:Lattice}. One can decompose these transformations into analogous elementary Schlesinger transformations. The step used to derive the form of the $R_n(x)$ matrix also works for irregular solutions at $x= 0$ and $x= \infty$ using the so-called Birkhoff-Guenther form of the solutions, which we explored in \cite{Ormerod:Lattice}. 

What we wish to show is that this isomonodromic deformation arises naturally from \eqref{dHamils}. We express the spectral curve in terms of the matrix coefficients as
\begin{align}
\Gamma(\lambda) = &\lambda^2 + \kappa_1\kappa_2(x-a_1t_n)(x-a_2t_n)(x-a_3)(x-a_4)\\
&- \lambda\left( \kappa_1((x-y_n)(x-\alpha_n)+z_{1,n}) + \kappa_2((x-y_n)(x-\beta_n)+z_{2,n})\right).\nonumber
\end{align}
When we expand this in terms of $y$ and $z$, we find that the spectral curve takes the form
\begin{align}\label{specyz}
\Gamma(\lambda) =& \lambda^2 + \kappa_1\kappa_2(x-a_1t_n)(x-a_2t_n)(x-a_3)(x-a_4) + \\
& \lambda \left(\dfrac{\kappa _1 \kappa _2 q x z_n \left(a_3-y_n\right) \left(y_n-a_4\right)}{y_n}-\frac{x \left(a_1 t_n-y_n\right) \left(a_2 t_n-y_n\right)}{q y z_n}\right. \nonumber \\  &\left.+\dfrac{(x-y_n) \left(\left(\theta _1+\theta _2\right) t_n-\left(\kappa _1+\kappa _2\right) x y_n\right)}{y_n}\right) \nonumber.
\end{align}
The variable $t_n$ is a somewhat artificial in the context of connection preserving deformations, or more generally the full affine Weyl group of B\"acklund transformations of type $D_5^{(1)}$ \cite{Sakai:rational}. Both the connection preserving deformation and the Schlesinger transformations arise in the same manner, hence, to determine whether the \eqref{dHamils} determines \eqref{qP6ev}, we are required to consider the evolution of \eqref{qP6ev} as a deformation in which $t_n=1$ and $\hat{a}_1/a_1 = \hat{a}_2/a_2 = \hat{\theta}_1/\theta_1 = \hat{\theta}_2/\theta_2 = q$, which defines a direction $n$. 

\begin{thm}
The evolution equations \eqref{qP6ev} admit the representation \eqref{dHamils}.
\end{thm}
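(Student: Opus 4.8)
The plan is to exploit the same mechanism as in the previous two examples: the characteristic equation \eqref{specyz} is biquadratic in the Painlev\'e variables, and \eqref{dHamils} amounts to equating two characteristic polynomials that differ only in one of $(y_n,z_n)$, so that the nontrivial root of the resulting quadratic yields the evolution. The first thing to record is that in \eqref{specyz} the $\lambda^2$ term and the $x$-polynomial multiplying $\lambda^0$ are free of $y_n,z_n$; all Painlev\'e-variable dependence sits in the coefficient of $\lambda^1$, which I will call $B(x,y_n,z_n)$. Writing $B = b_2x^2 + b_1x + b_0$, a short expansion will show that $b_2 = -(\kappa_1+\kappa_2)$ and $b_0 = -(\theta_1+\theta_2)t_n$ are also independent of $y_n,z_n$. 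Hence each of \eqref{dHamily} and \eqref{dHamilz}, which equate $\tilde\Gamma$ at two arguments differing in a single variable, collapses to the single scalar condition $b_1=b_1$, and the whole problem reduces to analysing the one rational coefficient $b_1$.

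For \eqref{dHamily} I would clear the denominator and write $y\,b_1 = P(z_n)\,y^2 + Q(z_n)\,y + S(z_n)$, a genuine quadratic in $y$ with
\begin{align*}
P &= (\kappa_1+\kappa_2) - \kappa_1\kappa_2\,qz_n - \tfrac{1}{qz_n}, \\
S &= (\theta_1+\theta_2)t_n - \kappa_1\kappa_2\,qz_n\,a_3a_4 - \tfrac{a_1a_2t_n^2}{qz_n}.
\end{align*}
The condition $b_1(y_{n+1},z_n)=b_1(y_n,z_n)$ with $y_{n+1}\neq y_n$ forces $y_{n+1}y_n = S/P$ by Vieta, the trivial root $y_{n+1}=y_n$ being discarded as stipulated after \eqref{dHamils}. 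Clearing $qz_n$ and inserting the determinantal constraint $\theta_1\theta_2 = \kappa_1\kappa_2a_1a_2a_3a_4$ turns $S/P$ into $\frac{1}{a_1a_2}\frac{(q\theta_1z_n - a_1a_2t_n)(q\theta_2z_n - a_1a_2t_n)}{(q\kappa_1z_n-1)(q\kappa_2z_n-1)}$, so that $a_1a_2\,y_{n+1}y_n$, upon setting $t_n=1$, is exactly \eqref{qP6yev}. Thus \eqref{dHamily} reproduces the $y$-evolution with the \emph{unshifted} parameters.

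For \eqref{dHamilz} the same $b_1$, now read as a function of $z$ with $y=y_{n+1}$ held fixed, has the shape $b_1 = A(y_{n+1})z + C(y_{n+1}) + D(y_{n+1})/z$, so $z_{n+1}z_n = D/A$ after discarding $z_{n+1}=z_n$. A direct substitution will give $q^2\kappa_1\kappa_2\,z_{n+1}z_n = \frac{(y_{n+1}-a_1t_n)(y_{n+1}-a_2t_n)}{(y_{n+1}-a_3)(y_{n+1}-a_4)}$, which with $t_n=1$ agrees with \eqref{qP6zev} \emph{up to the replacement $a_i\mapsto qa_i$ in the numerator}. This discrepancy is precisely what the intermediate curve $\tilde\Gamma$ is for: the half of \eqref{dHamils} that produces $z_{n+1}$ must carry the shifted values $\tilde a_i = \hat a_i = qa_i$ ($i=1,2$), while the half producing $y_{n+1}$ carries $\tilde a_i = a_i$. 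This is the direct analogue of the mixed assignment \eqref{Transvars} in the sixth Painlev\'e case, and it is consistent with the $n$-direction $a_i\to qa_i$, $\theta_i\to q\theta_i$ ($i=1,2$) fixed before the theorem, under which the constraint $\theta_1\theta_2=\kappa_1\kappa_2a_1a_2a_3a_4$ is preserved because $\kappa_1,\kappa_2,a_3,a_4$ are held fixed.

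The step I expect to be the main obstacle is this last one: exhibiting a single intermediate parameter set that underlies both halves of \eqref{dHamils} and checking that it interpolates consistently between the data of $A_n(x)$ and $A_{n+1}(x)$. The product formulas $y_{n+1}y_n=S/P$ and $z_{n+1}z_n=D/A$ fall out cleanly from Vieta, so no genuine analytic difficulty arises there; the real work is the bookkeeping of the multiplicative $q$-shifts across $\{a_1,a_2,\theta_1,\theta_2\}$ subject to the determinant constraint, establishing that the shift used for $y_{n+1}$ and the shift used for $z_{n+1}$ glue to one well-defined $\tilde\Gamma$. As remarked before the theorem, the connection-preserving deformation and the Schlesinger-type transformations arise in the same manner, so this factorisation through an intermediate curve is the $q$-difference shadow of the decomposition into two elementary Schlesinger transformations observed for the sixth Painlev\'e equation, and I would invoke that structural parallel to organise the consistency check rather than verifying it coordinate by coordinate.
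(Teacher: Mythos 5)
Your computational core is correct, and it is, in expanded form, exactly the paper's proof: the paper's entire argument is one sentence asserting that the intermediate change shifts $a_1$ and $a_2$ by $q$, and that applying \eqref{dHamils} to \eqref{specyz} then yields \eqref{qP6yev} and \eqref{qP6zev}. Your reduction to the coefficient of $\lambda x$, the two Vieta products $y_{n+1}y_n = S/P$ and $z_{n+1}z_n = D/A$, and the factorisation via $\theta_1\theta_2=\kappa_1\kappa_2a_1a_2a_3a_4$ supply the details the paper omits, and they arrive at the same identification of the shift (unshifted $a_1,a_2$ in the $y$-half, $qa_1,qa_2$ in the $z$-half).

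The one place you go astray is the step you defer as ``the main obstacle.'' That gluing is not a consistency check you could complete with more bookkeeping: it fails, and your own formulas show why. In $S$ the term proportional to $1/z_n$ carries the \emph{product} of the two moving roots of $\det A_n(x)$, so matching \eqref{dHamily} against \eqref{qP6yev} forces the intermediate values to satisfy $\tilde a_1\tilde a_2 = a_1a_2$ (times $t_n^2$), whereas \eqref{dHamilz} forces the roots \emph{individually} to be $qt_na_1,qt_na_2$, hence $\tilde a_1\tilde a_2=q^2a_1a_2t_n^2$; for generic $q$ no single parameter set does both. The appeal to \eqref{Transvars} cannot repair this, because the analogy is imperfect: in the sixth Painlev\'e case a single mixed set exists precisely because the parameters entering \eqref{dd61} and those entering \eqref{dd62} are essentially disjoint, whereas here $a_1,a_2$ enter both halves. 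The resolution is that the theorem, as the paper uses it, never requires a single curve: $\tilde\Gamma$ is read per equation, with \eqref{dHamily} taken on the curve carrying the current roots and \eqref{dHamilz} on the curve with $a_1,a_2$ multiplied by $q$ (equivalently, on the curve of $A_{n+1}(x)$, since the $\theta_i$ cancel from $D/A$) --- exactly the assignment you had already derived, and the reading the paper makes explicit in the $d$-$\mathrm{P}_{VI}$ example, where \eqref{dHamily} is matched to \eqref{dPE6a} and \eqref{dHamilz} to \eqref{dPE6b} separately. So the fix is not to carry out the gluing but to delete it: state the per-half assignment as the conclusion, and your proof coincides with the paper's.
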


\begin{proof}
The intermediate change of variables is the change shifts $a_1$ and $a_2$ by $q$, where the resulting application of \eqref{dHamils} to \eqref{specyz} gives \eqref{qP6yev} and \eqref{qP6zev}.
\end{proof}

For linear systems of $q$-difference equations, we have tested this relation on all the Lax pairs featured in the work by Murata \cite{Murata2009}, and found that \eqref{dHamils} is a succinct way of describing the evolution of each discrete isomonodromic deformation listed, including the $q$-Painlev\'e equation known as $q$-$\mathrm{P}(A_2^{(1)})$ \cite{SakaiE6, Murata2009}. 

\subsection{The discrete analogue of the sixth Painlev\'e equation}

The last example we wish to give is the discrete version of the sixth Painlev\'e equation  ($d$-$\mathrm{P}_{VI}$) \cite{AB06}. We call it the discrete version of the sixth Painlev\'e equation because it possesses a continuum limit to the sixth Painlev\'e equation. There are two Lax pairs for $d$-$\mathrm{P}_{VI}$, a difference-difference Lax pair of the form \eqref{op:h} and \eqref{Specn} \cite{AB06}, and a recent differential difference Lax pair of the form \eqref{op:diff} and \eqref{Specn} \cite{Dzhamay:dP}. We recently found a reduction from the lattice potential Korteweg-de Vries equation to $d$-$\mathrm{P}_{VI}$ using a parameterization of the Lax pair from \cite{AB06}. 

The form of discrete version of the sixth Painlev\'e equation we chose may be written
\begin{subequations}\label{dPE6}
\begin{align}
\label{dPE6a}(z_{n+1} + y_n)(y_{n+1} + z_{n+1})&= \dfrac{(y_{n+1}-a_3)(y_{n+1}-a_4)(y_{n+1}-a_5)(y_{n+1}-a_6)}{(y_{n+1}-a_1+t_n)(y_{n+1}-a_2+t_n)},\\
\label{dPE6b}(y_{n+1} + z_n)(y_n + z_n)&= \dfrac{(z_n+a_3)(z_n+a_4)(z_n+a_5)(z_n+a_6)}{(z_n + a_7 + t_n)(z_n + a_8 + t_n)},
\end{align}
\end{subequations}
where $t_{n+1} = t_n + h$ and
\[
a_1 + a_2 + a_3 + a_4 + a_5 + a_6 + a_7 + a_8 = h.
\]
We start with a Lax pair of the form
\begin{align}\label{borodin}
&A_n(x) =x^3I+ \\ &\begin{pmatrix} \left(\kappa_1 + t_n\right)((x-\alpha_n)(x-y_n) + z_{1,n}) & \left(\kappa_2 + t_n\right)w_n(x-y_n) \\ \left(\kappa_1 + t_n\right)\dfrac{(\gamma_n x+\delta_n)}{w_n} & \left(\kappa_2 + t\right)((x-\beta_n)(x-y_n)+z_{2,n})
\end{pmatrix}\nonumber,
\end{align}
where the function $w$ is related to the gauge freedom. The functions, $\alpha$, $\beta$, $\gamma$ and $\delta$ are determined by
\begin{align}
\det A(x) = (x-a_1+t_n)(x-a_2+t_n)(x-a_3)(x-a_4)(x-a_5)(x-a_6).
\end{align}
There is also a relation between $z_{1,n}$ and $z_{2,n}$, which means that $z_{1,n}$ and $z_{2,n}$ may be written in terms of a single variable, $z_n$, chosen later to simplify the evolution equations. 

As we did in a previous study \cite{Ormerod:Symreds}, we give expressions for these functions in terms of the coefficients of the determinant;
\[
\sum_{k=0}^6 \mu_i x^k = \det A_n(x).
\]
The functions $\alpha_n$, $\beta_n$, $\gamma_n$ and $\delta_n$ are given, in terms of these 
\begin{subequations}\label{parameters}
\begin{align}
\alpha_n =& \frac{t_n^2}{\kappa _1-\kappa _2}+\frac{\mu _3+\left(\kappa _2-\kappa _1\right) \left(y_n^2-z_{2,n}\right)+\mu _4 y_n}{\left(\kappa _1-\kappa _2\right) \left(\kappa _1+t_n\right)}+\frac{t_n \left(\kappa _1+\kappa
   _2-y_n\right)}{\kappa _1-\kappa _2}\\
   &-\frac{\mu _4-2 y_n^2+\kappa _1 y_n+z_{1,n}+z_{2,n}-\kappa_1\kappa _2}{\kappa _1-\kappa _2},\nonumber\\
\beta_n =&\frac{t_n^2}{\kappa _2-\kappa _1}-\frac{\mu _3+\left(\kappa _1-\kappa _2\right) \left(y_n^2-z_{1,n}\right)+\mu _4 y_n}{\left(\kappa _1-\kappa _2\right) \left(\kappa _2+t_n\right)}-\frac{t_n \left(\kappa _1+\kappa _2-y_n\right)}{\kappa
   _1-\kappa _2}\\
   &+\frac{\mu _4-2 y_n^2+\kappa _2y_n+z_{1,n}+z_{2,n}-\kappa _1 \kappa _2}{\kappa _1-\kappa _2},\nonumber \\
\gamma_n =& \alpha_n \beta_n +\frac{\mu _0+\mu _1 y_n}{y_n^2 \left(\kappa _1+t_n\right) \left(\kappa _2+t_n\right)}-\frac{z_{1,n} z_{2,n}}{y_n^2}+y_n (\alpha_n +\beta_n)+z_{1,n}+z_{2,n},\nonumber \\
\delta_n &= \frac{\mu _0}{y \left(\kappa _1+t_n\right) \left(\kappa _2+t_n\right)}-\frac{\left(\alpha  y_n+z_{1,n}\right) \left(\beta  y_n+z_{2,n}\right)}{y_n}.
\end{align}
\end{subequations}
We just need to parameterize this moduli space in 
\begin{subequations}\label{zdef}
\begin{align}
y_n^3 + z_{1,n} \left(\kappa_1 + t_n\right) =&\dfrac{\left(y_n-a_3\right) \left(y_n-a_4\right) \left(y_n-a_5\right) \left(y_n-a_6\right)}{z_n+y_n},\\
y_n^3 + z_{2,n} \left(\kappa_2 + t_n\right) =& (z_n+y_n) \left(y_n-a_1+t_n\right) \left(y_n-a_2+t_n\right).
\end{align}
\end{subequations}
We may think of the moduli space being parameterized by $y_n$, $z_n$ and $w_n$. 

\begin{thm}
The connection preserving deformation is given by \eqref{dPE6}.
\end{thm}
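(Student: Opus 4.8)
The plan is to follow the same computational template established in the preceding examples (the second and sixth Painlev\'e cases and the $q$-analogue), since the statement asserts that the connection preserving deformation associated with the Lax pair \eqref{borodin} reproduces \eqref{dPE6}. First I would record the determinantal structure: from \eqref{comp} in its additive-shift form $A_{n+1}(x)R_n(x) = R_n(x+h)A_n(x)$, taking determinants shows that $\det R_n(x)$ satisfies a linear first-order difference equation in $x$, whose solution (by ratios of $\Gamma$-functions, as in Theorem \ref{hdiffseries}) fixes $\det R_n(x)$ up to the shifted roots of $\det A_n(x)$. Because $A_n(x) = x^3 I + \cdots$ has leading term $x^3 I$ and $\det A_n(x)$ is the degree-six polynomial given above, the shift $t_{n+1} = t_n + h$ moves the two roots $a_1 - t_n, a_2 - t_n$, and I expect $R_n(x)$ to take the rational form $R_n(x) = (x I + R_0)/\big((x - a_1 + t_{n+1})(x - a_2 + t_{n+1})\big)$ or a closely related shape, with $R_0$ a constant matrix.

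Next I would pin down $R_0 = (r_{ij})$ using the two formal solutions guaranteed by Theorem \ref{hdiffseries}: expanding $R_n(x) = Y_{n+1}(x) Y_n(x)^{-1}$ both near the roots and near $x = \infty$ gives two independent expressions for each entry, exactly as in the Jimbo--Sakai argument for $q$-$\mathrm{P}_{VI}$. Concretely, I would compute $r_{12}$ in two ways --- once from the asymptotic series at infinity (yielding a relation involving $w_{n+1}/w_n$) and once from the residues of the $(1,2)$ entry of the compatibility \eqref{comp} at the shifted and unshifted roots --- and equating them produces the gauge evolution for $w_n$ together with one of the two equations \eqref{dPE6}. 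Comparing the $(1,1)$ entries then yields the remaining equation. The parameterization \eqref{zdef} has been engineered precisely so that these residue conditions collapse into the factored right-hand sides of \eqref{dPE6a} and \eqref{dPE6b}; I would substitute \eqref{parameters} and \eqref{zdef} at the appropriate stage rather than carrying the bulky $\alpha_n, \beta_n, \gamma_n, \delta_n$ through every step.

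The main obstacle I anticipate is the bookkeeping of the shifted quantities. Unlike the $q$-case where the deformation cleanly rescales $a_1, a_2, \theta_1, \theta_2$ by $q$, here the additive shift $t_n \to t_n + h$ interacts with the constraint $\sum_{i=1}^{8} a_i = h$ and with the way $\kappa_1 + t_n$ and $\kappa_2 + t_n$ appear as prefactors inside $A_n(x)$, so I must track carefully which of the $a_i$ and which diagonal data are moved by $\Delta_n$ and verify that the resulting $\det A_{n+1}(x)$ matches the shift of $\det A_n(x)$ consistent with $\det R_n(x)$. I would discharge this by first checking the determinant identity $\det A_{n+1}(x) \det R_n(x) = \det R_n(x+h) \det A_n(x)$ as a scalar rational-function identity --- this both confirms the correct root-shift assignment and localizes any sign or indexing error before the heavier matrix computation. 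Once the determinant level is consistent and $R_0$ is determined from the two residue computations, the verification that \eqref{dPE6} solves \eqref{comp} is, as in the earlier propositions, a routine (if lengthy) substitution that I would not grind through in detail.
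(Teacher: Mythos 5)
Your proposal follows essentially the same route as the paper's proof: taking determinants of the compatibility to fix $\det R_n(x)$ via a scalar difference equation, using the formal solutions of Theorem \ref{hdiffseries} to pin the rational form $R_n(x) = x(xI+R_0)/\bigl((x-a_1+t_n)(x-a_2+t_n)\bigr)$, then evaluating the $(1,2)$ entry of \eqref{comp} in two ways (asymptotics at infinity versus residues at the shifted and unshifted roots) to obtain the gauge evolution \eqref{dPE6w} and \eqref{dPE6a}, with the $(1,1)$ entry yielding \eqref{dPE6b}. Your extra preliminary check of the scalar determinant identity is a sensible safeguard but does not alter the argument.
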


\begin{proof}
There is a matrix, $R_n(x)$, relating the systems by \eqref{Specn}. The determinant of the solution is the solution of a scalar difference equation, and can be solved explicitly, giving that 
\[
\det R_n(x) = \dfrac{1}{(x-a_1 + t_{n+1})(x-a_1 + t_{n+1})}.
\]
Reading off the formal solutions, \eqref{diffform} above, and solving for $\rho_1$ and $\rho_2$ gives a formal solution of the form
\begin{equation}\label{series}
Y_{\pm \infty}(x) = x^{3x/h} e^{-3x/h} \left( I + \dfrac{Y_1}{x} + \dfrac{Y_2}{x^2} + \ldots \right) \mathrm{diag} \left(x^{\frac{\kappa_1+t_n}{h} - \frac{3}{2}},x^{\frac{\kappa_1+t_n}{h} - \frac{3}{2}}\right).
\end{equation}
Computing $R_n(x) = \hat{Y}_{\pm\infty}(x) Y_{\pm\infty}(x)^{-1}$, gives us a rational matrix of the form
\[
R_n(x) = \dfrac{x(xI + R_0)}{(x-a_1 + t_n)(x-a_2 + t_n)}.
\]
Comparing the residue of \eqref{comp} at $x = a_1 +t_n$ and $x=a_1+t_n$ gives a value for the top right entry, which we compare against the value obtained by considering leading asymptotics of the top right entry for \eqref{comp} to obtain
\begin{equation}
\label{dPE6w}\dfrac{w_{n+1}}{w_n} =  \frac{\left(\kappa _2+t_n\right)\left(a_3+a_4+a_5+a_6-h+\kappa _2+z_n+t_n\right)}{\left(\kappa _2+t_n\right) \left(a_3+a_4+a_5+a_6+\kappa _1+z_n+t_{n+1}\right)}.
\end{equation}
Using this expression in the value for the top right entry obtained by considering residues at $x= a_1 +t-h$ and $x= a_2 +t-h$ gives \eqref{dPE6a}, where
\[
a_7 = -\kappa_1 - a_1 - a_2, \hspace{1cm} a_8 = a_3 + a_4 + a_5 + a_6 + \kappa_1.
\]
Comparing the top left entry using \eqref{dPE6w} and \eqref{dPE6a} readily gives \eqref{dPE6b}. Furthermore, the compatibility under these values is an identity.
\end{proof}

We now turn to the expression for the isomonodromic deformations using the spectral curve. We first write the characteristic equation as
\begin{align*}
\Gamma(\lambda,x) =& \lambda^2 + (x-a_1+t_n)(x-a_2+t_n)(x-a_3)(x-a_4)(x-a_5)(x-a_6)\\
&+ \lambda \left( x^2 \left(\kappa _1+\kappa _2+2 t_n\right)-x \left(\alpha_n  \left(\kappa _1+t_n\right)+\beta_n  \left(\kappa _2+t_n\right) +y_n \left(\kappa _1+\kappa _2+2 t_n\right) \right)\right.\\
& \left.-y_n \left(\alpha_n  \left(\kappa _1+t_n\right)+\beta  \left(\kappa_2+t_n\right)\right)-z_{1,n} \left(\kappa _1+t_n\right)-z_{2,n} \left(\kappa _2+t_n\right)-2 x^3\right),
\end{align*}
which may be expressed in terms of $y$ and $z$. The discrete isomonodromic deformations are described by the following theorem.

\begin{thm}
The evolution equations \eqref{dPE6} admit the representation \eqref{dHamils}.
\end{thm}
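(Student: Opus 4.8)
The plan is to mirror the computation that settled the $q$-$\mathrm{P}_{VI}$ case, exploiting the fact that for the Lax matrix \eqref{borodin} the determinant $\det A_n(x) = (x-a_1+t_n)(x-a_2+t_n)(x-a_3)(x-a_4)(x-a_5)(x-a_6)$ carries no dependence on the Painlev\'e variables. Consequently all of the $(y_n,z_n)$-dependence of the characteristic equation resides in the coefficient of $\lambda$, that is in $\mathrm{Tr}\,A_n(x)$, which is cubic in $x$ with leading $x^3$ and $x^2$ coefficients $2$ and $-(\kappa_1+\kappa_2+2t_n)$ both independent of $(y_n,z_n)$. First I would substitute \eqref{parameters} and \eqref{zdef} to write the two lower coefficients (of $x^1$ and $x^0$ in the trace) as explicit rational functions of $y_n$ and $z_n$; these are the only quantities that enter \eqref{dHamils}.

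Next I would fix the intermediate curve $\tilde\Gamma$. As in the $q$-$\mathrm{P}_{VI}$ case, where the intermediate change shifted only $a_1$ and $a_2$ while leaving the remaining parameters at their values for $A_n(x)$, here the connection preserving direction has $t_{n+1}=t_n+h$ so that the determinant roots $a_1-t_n,a_2-t_n$ are shifted. I would take $\tilde\Gamma$ to be the characteristic equation with $a_1,a_2$ advanced by the appropriate additive amount while the remaining parameters are held fixed, pinning down the exact split by demanding consistency with the connection preserving deformation already constructed above (the $R_n(x)$ computation yielding \eqref{dPE6}), together with the parameter identifications $a_7=-\kappa_1-a_1-a_2$, $a_8=a_3+a_4+a_5+a_6+\kappa_1$ and $\sum_i a_i = h$.

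With $\tilde\Gamma$ fixed I would carry out the two QRT-type switches. For \eqref{dHamily} the difference $\tilde\Gamma(\lambda,x,y_{n+1},z_n)-\tilde\Gamma(\lambda,x,y_n,z_n)$ equals $\lambda$ times a polynomial of degree at most one in $x$, and requiring this to vanish identically in $(\lambda,x)$ forces its $x^1$ and $x^0$ coefficients to agree at $y_{n+1}$ and at $y_n$. After clearing denominators each such equation contains the trivial factor $(y_{n+1}-y_n)$; discarding it and simplifying with \eqref{zdef} should collapse the residual relation to \eqref{dPE6a}. The analogous treatment of \eqref{dHamilz}, fixing $y_{n+1}$ and discarding the factor $(z_{n+1}-z_n)$, should produce \eqref{dPE6b}.

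The main obstacle is that, unlike $\mathrm{P}_{II}$ where the biquadratic $H_{II}$ was manifest, here the biquadratic structure is hidden: $\alpha_n,\beta_n,\gamma_n,\delta_n,z_{1,n},z_{2,n}$ are themselves rational in $(y_n,z_n)$ through \eqref{parameters} and \eqref{zdef}, so I must verify that after substitution and clearing of denominators the $(y_n,z_n)$-dependent coefficients really are quadratic in each variable (and if they are not, the correct coordinates may require an analogue of the $\zeta=zy$ change used for $\mathrm{P}_{VI}$). The delicate point is confirming that the $x^1$ and $x^0$ coefficient equations arising from \eqref{dHamily} are compatible, that is, they determine a single $y_{n+1}$, and that the resulting rational expressions simplify exactly to the clean form \eqref{dPE6}; this compatibility is precisely the statement that the $(y_n,z_n)$-fibre of $\tilde\Gamma$ is a single biquadratic on which the two switches are well defined.
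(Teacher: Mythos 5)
Your proposal follows essentially the same route as the paper: its proof likewise notes that the $q$-$\mathrm{P}_{VI}$ computation carries over verbatim, with the intermediate change moving $a_1$ and $a_2$ but not $a_7$ and $a_8$, so that \eqref{dHamily} reproduces \eqref{dPE6a} and \eqref{dHamilz} reproduces \eqref{dPE6b}. The only real difference is cosmetic: the paper pins down the intermediate parameter change intrinsically, by demanding that the resulting characteristic equation retain the same form (which makes the choice unique), whereas you fix it by matching against the connection preserving deformation already computed via $R_n(x)$.
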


\begin{proof}
We simply note that the above approach works where the intermediate change is moves $a_1$ and $a_2$ but not $a_7$ and $a_8$, in which \eqref{dHamily} is equivalent to \eqref{dPE6a} while \eqref{dHamilz} is equivalent to \eqref{dPE6b}. Demanding that the resulting characteristic equation is of the same form ensures the particular change in parameters is uniquely defined. 
\end{proof}

\section{Discussion}

We have shown that a certain class of discrete isomomondromic deformations admit an incredibly simple formulation in terms of the characteristic equation of the associated linear problem. The form of the evolution defining the discrete isomonodromic deformation seems to be the same regardless whether we have a differential-difference, $q$-difference-difference or difference-difference Lax pair. It would be an interesting task to show that deformations of the type described here are Frobenius integrable in some sense.

\section{Acklowedgements}

We would like to acknowledge helpful discussions with Prof. Eric Rains and Prof. Anton Dzhamay, we would like to acknowledge Prof. Basil Grammaticos for alerting us to some relevant literature.

\end{document}